\documentclass[journal]{IEEEtran}

\usepackage{
  amsmath,
  amssymb,
  amsthm,
  mathtools,
  enumitem,
  xcolor,
  graphicx,
}
\usepackage[hidelinks]{hyperref}
\usepackage[nameinlink,capitalise,noabbrev]{cleveref}
\usepackage[T1]{fontenc}

\theoremstyle{plain} 
\newtheorem{theorem}{Theorem}[section] 
\newtheorem*{theorem*}{Theorem} 
\newtheorem{lemma}[theorem]{Lemma}

\newtheorem{corollary}[theorem]{Corollary}

\theoremstyle{definition} 
\newtheorem{definition}[theorem]{Definition}

\usepackage{cite}

\ifCLASSINFOpdf
\else
\fi
\begin{document}
%
\title
{
  Underwater Color Restoration \\ with Vanishing Uncertainty
}
%
%
%

\author{
  Grigory~Solomatov,
  Derya~Akkaynak$^{\ast}$ \\
  \small Hatter Department of Marine Technologies, University of Haifa, Haifa, Israel \\
  \small Interuniversity Institute for Marine Sciences, Eilat, Israel \\
  \small$^\ast$Corresponding author. Email: dakkaynak@univ.haifa.ac.il\and
}

\maketitle


\begin{abstract}
  Underwater color restoration promises to unlock color as a reliable signal for aquatic sciences,
  but achieving this with scientific confidence remains out of reach.
  Current methods are validated almost exclusively on an empirical basis,
  which provides confidence only to the extent that the vast diversity
  of possible visibility conditions is covered with
  end-to-end testing using a known ground truth.
  This is exacerbated by color restoration being a fatally ill-posed problem
  when considered in full mathematical generality,
  requiring additional constraints to narrow the solution to a finite uncertainty interval.
  The gap between which constraints suffice in theory
  and which constraints are satisfied by real-world data
  is poorly understood,
  making it unclear whether existing methods are solving a problem that is actually solvable.
  In this article, we investigate the theoretical side of this gap,
  identifying idealized conditions which guarantee
  bounded uncertainty that converges to zero
  as the spatial resolution of the camera increases.  
\end{abstract}

\begin{IEEEkeywords}
  Underwater, color restoration, scattering media.
\end{IEEEkeywords}

%
\IEEEpeerreviewmaketitle

\section{Introduction}

Underwater color restoration
is the problem of ``removing water'' from underwater images,
reversing color distortions caused by light traveling through a scattering medium.
The first relevant techniques were conceived for image dehazing in air
\cite{nayar1999vision,schechner2001instant},
and were subsequently adapted to underwater \cite{schechner2004clear,schechner2005recovery}.
As this technology developed,
the original motivation of enhancing visibility
evolved into a promise of unlocking a fundamentally new way of collecting data about our environment \cite{akkaynak_revised_2018,akkaynak2019sea}.
Scientific use of color was already prevalent in air \cite{rossel2006colour,richardson2007use,ahrends2008quantitative,sonnentag2012digital}
and had seen some success underwater \cite{winters2009photographic,beijbom2012automated},
though not without color distortions as a central challenge.
The ability to digitally remove water from images promised to overcome this,
enabling use of color as a signal in aquatic sciences.
Detection of bleaching in corals,
estimation of chlorophyll concentration in seagrass meadows,
as well as species classification of fish and other organisms
are all examples of tasks that could benefit from color restoration.
In this regard, a prudent question is whether
all of this can be achieved with scientific accuracy and precision.

\begin{figure}[ht]
\centering
\includegraphics[width=0.9\linewidth]{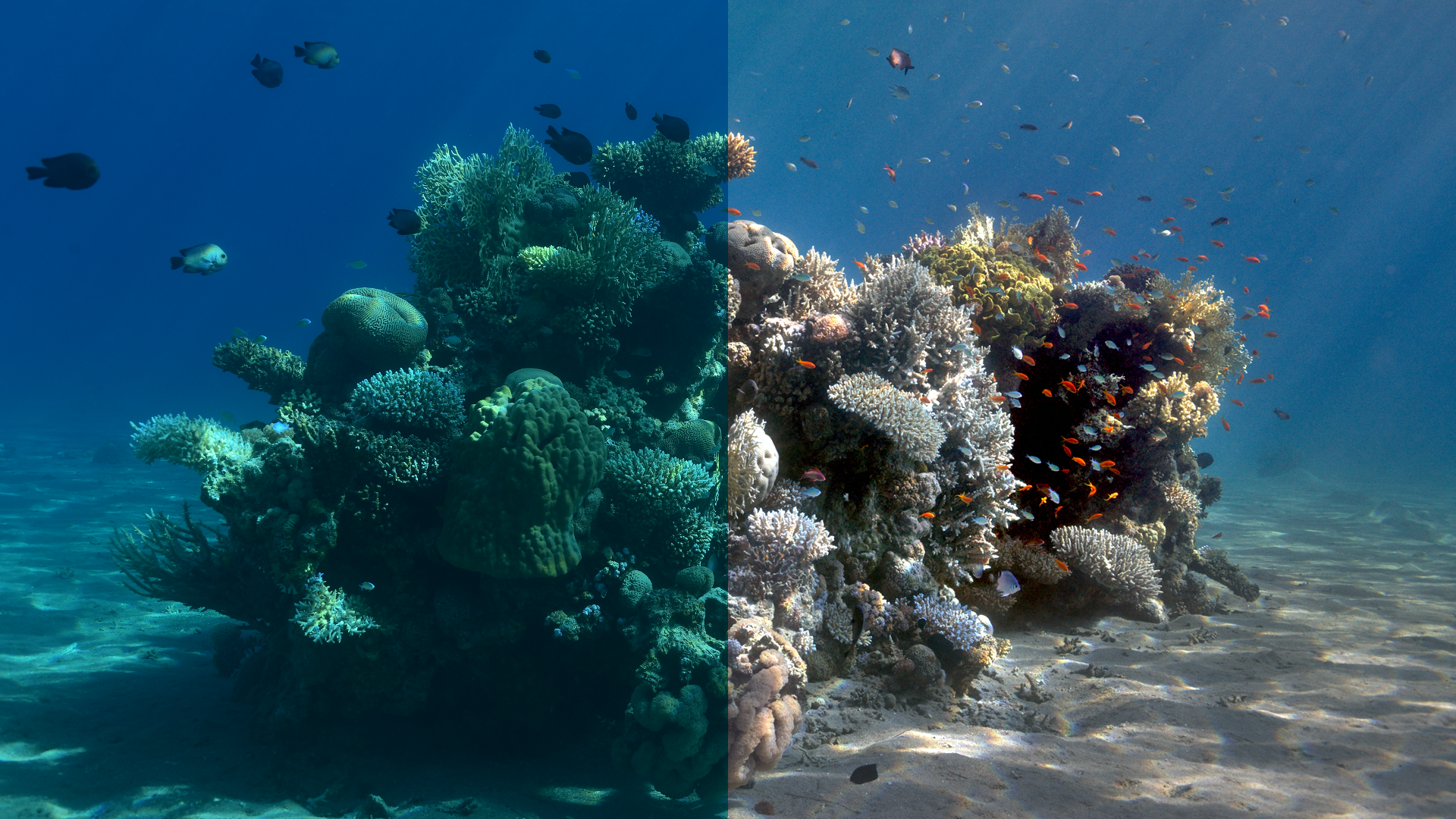}
\caption{Underwater image of a coral reef. Left: original. Right: SeaThru \cite{akkaynak2019sea}.}
\label{fig:seathru}
\end{figure}

\subsection{Scientific confidence}

The principled approach to underwater color restoration
is via the radiative transfer equation (RTE),
which models light propagation in scattering media.
Its general form is far too complicated and impractical for the task,
but even crude simplifications can produce impressive results,
as seen in \cref{fig:seathru}.
The main challenge for physics-based methods
is the underconstrained nature of the resulting inverse problems,
with infinitely many substantially different solutions
being equally consistent with any input data,
at least in the case of full mathematical generality.
Unconstrained color recovery is therefore
a truly impossible problem in the most fundamental sense,
far beyond that of engineering practicalities.

Fortunately, real-world data is not unconstrained,
as evidenced by, e.g., the well-known fact that
neighboring pixels are always correlated in ``natural'' images.
However, whether and when real-world images are sufficiently constrained
for meaningful color restoration remains poorly understood,
limiting scientific confidence in the output of \emph{all} existing methods.
This is not to say that all current methods are inaccurate or imprecise,
but that their accuracy and precision can only be assessed
empirically on a case-by-case basis, as seen in e.g. \cite{alsakar2025underwater}.
The issue here goes beyond
deployment in conditions distinct from those of validation\textemdash
which is itself a real concern
due to the strong spatio-temporal variation of water's optical properties \cite{jerlov1976marine}.
Instead, if the experimental setup happens to instantiate an ill-posed problem,
then the prediction error can only be estimated
to the extent that the solution space clusters around the ground truth.
In the worst case,
this space can be unbounded (\cref{thm:ill-posedness}),
making error estimation fundamentally impossible
without knowing said ground truth.
Statistical error estimation is of course still possible,
but only if the deployment conditions match those of validation.

\subsection{Feasibility gap}

Regardless of method design,
our scientific confidence in underwater color restoration
is bottlenecked by our ability to answer
the following key questions about when the problem is meaningfully solvable:
\begin{enumerate}
  \item Which conditions are sufficient and which are necessary?
  \item Which conditions are satisfied by real-world data?
\end{enumerate}
The gap between the answers to these
determines the extent to which color restoration is ultimately feasible;
however, both remain largely unanswered to date.
The former calls for a theoretical investigation,
for which this article is an early contribution,
while the latter requires broad-scale empirical data acquisition and analysis,
the basis of which is formed by existing empirical work on color restoration.
For efficient progress,
these two pursuits would benefit from being mutually informed by one another,
guiding future research questions with the aim of narrowing the known feasibility gap.

\subsection{Contributions}

In this article, we identify a set of sufficient conditions
for color restoration to be possible.
More concretely, we show that
when these conditions are satisfied,
the correctly restored pixel values
can be bounded to finite uncertainty intervals
whose lengths vanish as the spatial resolution of the camera increases.
As this is early work on the theoretical requirements for color restoration,
our conditions are idealized beyond immediate applicability to real-world images,
though the obtained results help narrow the current gap in understanding
between what is sufficient in theory and what is granted in reality,
i.e., the aforementioned feasibility gap.
Moreover, these results improve on our previous work in \cite{solomatov2026conditions},
where the assumed setting was idealized to a strictly greater degree.
The most important changes are summarized below.

Firstly, instead of assuming a perfect hyperspectral camera
capable of flawlessly capturing apparent radiance,
our current work makes no assumptions on the camera whatsoever.
Any camera is supported in principle, even monochromatic;
and the spectral sensitivity functions need not be known.
Secondly, our previous work relied on oracle access to
inherent radiance segmentation\textemdash
a partitioning of image pixels into subsets sharing the same inherent radiance.
Although we rely on it here as well,
we also identify sufficient conditions for such segmentation
to be uniquely determinable by an algorithm.
It is with these conditions that we depart from a realistic setting the most
as we assume that the inherent radiance is piecewise constant in image coordinates.
Understanding this idealized setting, however,
suggests a clear direction for relaxing these assumptions in future research,
advancing the pursuit towards underwater color restoration with scientific confidence.

\subsection{Problem Statement}

The mathematical question we consider can be informally stated as follows:
Given an underwater photograph of e.g. a fish at a known distance \(z_1\) from the camera,
what would its color appear to be at some other distance \(z\)?
The interesting case in practice is \(z=0\),
where the water no longer impacts the color except via the ambient light,
but mathematically it hardly makes a difference which target distance \(z\) we consider.
To state this question more formally
requires an image formation model, which we take to be:
\begin{align}
  \label{eq:imf}
  P_{i,j}(z_i) = \textstyle\int_\Lambda(e^{-cz_i}L_i + (1 - e^{-cz_i})B)S_j\mathrm{d}\lambda
  \ .
\end{align}
Here,
\(P_{i,j}(z_i)\) denotes the pixel intensity at pixel index \(i\)
corresponding to distance \(z_i\) from the camera,
with \(j\) denoting the channel index, e.g., red, green or blue.
Furthermore,
\(\Lambda\) is the interval on the real number-line \(\mathbb{R}\)
corresponding to wavelengths of light in the visible spectrum.
In the integrand,
the quantities \(c,L_i,B, S_j\) are all functions of wavelength \(\lambda \in \Lambda\),
and they have the following physical interpretations:
The spectral sensitivity functions \(S_j(\lambda)\)
describe how sensitive the camera is to incoming light.
The inherent radiance \(L_i(\lambda)\) describes the spectral radiance towards the camera at the fish's surface.
The beam attenuation coefficient \(c(\lambda)\) describes how quickly light attenuates with distance in the water.
Finally, the backscatter at infinity \(B(\lambda)\)
is incoming radiance in the direction of the camera
when looking at ``infinity''\textemdash
a line of sight containing only water as far as the eye (or camera) can see.
Finally, \eqref{eq:imf} is only valid when looking horizontally,
and even then is derived from the RTE under simplifying assumptions
(see \cref{sec:imf}).
What we mean by color restoration in this setting
is the problem of estimating \(P_{i,j}(z)\) from \(P_{i,j}(z_i)\) and \(z_i\).
In general, this is ill-posed (see \cref{sec:ill-posedness}),
so additional assumptions are necessary.

\subsection{Assumptions and results}

Our results consist of two main parts, both centered around inherent radiance segmentation\textemdash
a partition of all pixel indices into sets \(\mathcal{I}\)
satisfying \(L_i = L_\mathcal{I}\) for all \(i \in \mathcal{I}\),
where \(L_\mathcal{I}\) is the inherent radiance shared by all pixels in \(\mathcal{I}\).
In \cref{sec:using-irs}, we rely on this segmentation to estimate \(P_{i, j}(z)\),
while in \cref{sec:obtaining-irs}, we show how this segmentation may be obtained in an idealized setting.
The highlights are presented below.

For the first main result, in addition to having
the pixel intensities \(P_{i,j}(z_i)\)
and the associated distances \(z_i\),
we require \emph{a priori} upper and lower bounds on \(c\) across \(\Lambda\),
i.e., \(u_1, u_2 \in \mathbb{R}\) satisfying \(u_1 \le e^{-c(\lambda)} \le u_2\) for all \(\lambda \in \Lambda\).
Furthermore, we require at least one pixel in the image
to be associated with a very large (infinite) distance,
and we denote the corresponding pixel intensity by \(P_j(\infty)\) for every channel \(j\).
Finally, we assume that some inherent radiance segmentation is known for pixels with finite distance.

\begin{theorem*}[\ref{thm:main1-eval}]
  Let  \(u_1,u_2 \in \mathbb{R}\) such that
  \(u_1 \le e^{-c(\lambda)} \le u_2\) for all \(\lambda \in \Lambda\),
  and let \(\mathcal{I}\) be an inherent radiance segment.  
  For every \(k \in \mathcal{I}\), let \(\alpha_k:\mathbb{R} \to \mathbb{R}\) be any function, and let
  \begin{align*}
    \Psi_\mathcal{I}(z)
    &= \max_{u_1 \le u \le u_2}|\textstyle\sum_{k \in \mathcal{I}} \alpha_k(z)u^{z_k} - u^z| \ .
  \end{align*}
  Writing \(\bar{P}_{i,j}(z) = P_{i,j}(z) - P_j(\infty)\) for all \(i \in \mathcal{I}\),
  we then have
  \begin{align*}
    \bar{P}_{i,j}(z)
    &= \textstyle\sum_{k \in \mathcal{I}}\alpha_k(z)\bar{P}_{k,j}(z_k)
      \pm \Psi_\mathcal{I}(z)\textstyle\int_\Lambda |L_\mathcal{I} - B| S_j\mathrm{d}\lambda \ ,
  \end{align*}
  where the notation \(x = y \pm \delta\) means \(x \in \{y + w : |w| \le |\delta|\}\).
\end{theorem*}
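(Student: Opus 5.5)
We subtract the infinite-distance pixel to remove the backscatter offset, and then apply a pointwise-in-wavelength estimate.

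\textbf{Step 1 (offset form).} Taking \(z_i\to\infty\) in \eqref{eq:imf} gives \(P_j(\infty)=\int_\Lambda B S_j\,\mathrm{d}\lambda\). This uses \(e^{-cz}\to 0\), which holds because \(c>0\); in the bounded setting we need \(u_2<1\), or we simply take the infinite-distance pixel to mean that \(e^{-cz}L\) has vanished. Subtracting, for any \(i\in\mathcal{I}\) and any distance \(z\),
\begin{align*}
  \bar{P}_{i,j}(z) = \textstyle\int_\Lambda e^{-c z}(L_\mathcal{I}-B)S_j\,\mathrm{d}\lambda
  = \int_\Lambda u(\lambda)^{z}(L_\mathcal{I}-B)S_j\,\mathrm{d}\lambda ,
\end{align*}
where \(u(\lambda)=e^{-c(\lambda)}\in[u_1,u_2]\). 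The key point is that every pixel in the segment shares the same \(L_\mathcal{I}\), so the factor \((L_\mathcal{I}-B)S_j\) is independent of the pixel index. This applies both to the target \(\bar P_{i,j}(z)\) and to each observed \(\bar P_{k,j}(z_k)\).

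\textbf{Step 2 (linear combination).} By linearity of the integral,
\begin{align*}
  \bar{P}_{i,j}(z)-\textstyle\sum_{k\in\mathcal{I}}\alpha_k(z)\bar{P}_{k,j}(z_k)
  = \int_\Lambda \Bigl(u(\lambda)^z-\sum_{k\in\mathcal{I}}\alpha_k(z)u(\lambda)^{z_k}\Bigr)(L_\mathcal{I}-B)S_j\,\mathrm{d}\lambda .
\end{align*}
The sum is finite, or at least we assume the segment is finite or the sum converges absolutely, so the interchange is justified.

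\textbf{Step 3 (bound).} Since \(u(\lambda)\in[u_1,u_2]\) for every \(\lambda\), the bracket has absolute value at most \(\Psi_\mathcal{I}(z)\) pointwise. Assuming \(S_j\ge 0\), which is physical since sensitivities are nonnegative, we get
\begin{align*}
  \Bigl|\int_\Lambda (\cdots)(L_\mathcal{I}-B)S_j\,\mathrm{d}\lambda\Bigr| \le \Psi_\mathcal{I}(z)\int_\Lambda|L_\mathcal{I}-B|S_j\,\mathrm{d}\lambda .
\end{align*}
This is exactly the claimed \(\pm\) statement.

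\textbf{Main obstacle.} The only delicate points are in Step 1. First, we must justify identifying \(P_j(\infty)\) with \(\int BS_j\); this requires \(u_2<1\), or the convention that the pixel at infinity is pure backscatter. Second, the max defining \(\Psi_\mathcal{I}\) must be attained, and it is, since the function is continuous on the compact interval \([u_1,u_2]\) provided \(u_1>0\), so that real powers are defined. Everything after that is a one-line Hölder-type estimate.
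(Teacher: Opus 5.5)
Your proof is correct and follows the paper's route. The paper applies \cref{lem:est-by-eval} (your Step 2) and \cref{lem:bound-by-eval} (your Step 3) to the attenuating function $\bar{P}_{\mathcal{I},j} = P[-c, (L_\mathcal{I}-B)S_j]$, taking $P_j(\infty)=\int_\Lambda BS_j\,\mathrm{d}\lambda$ as a definition rather than as a limit; its formal bound is $\int_\Lambda |(L_\mathcal{I}-B)S_j|\,\mathrm{d}\lambda$, so your explicit assumption $S_j\ge 0$ is exactly what turns it into the form stated here.
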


The intuitive meaning of \cref{thm:main1-eval} can be understood as follows:
Within each segment \(\mathcal{I}\),
the user may compute the values \(\alpha_k(z)\) that minimize \(\Psi_\mathcal{I}(z)\).
For any \(i \in \mathcal{I}\), this immediately yields the approximation
\begin{align*}
  P_{i,j}(z) \approx \textstyle\sum_{k \in \mathcal{I}}\alpha_k(z)\bar{P}_{k,j}(z_k) + P_j(\infty)
\end{align*}
with the error bounded by \(\Psi_\mathcal{I}(z)\textstyle\int_\Lambda |L_\mathcal{I} - B| S_j\mathrm{d}\lambda\).
Furthermore, if we increase the resolution of our camera,
thereby increasing the cardinality of each segment \(\mathcal{I}\),
then in the limit \(|\{z_i\}_{i \in \mathcal{I}}| \to \infty\) we also have \(\Psi_\mathcal{I}(z) \to 0\),
provided \(\sum_{i \in \mathcal{I}}z_i^{-1} \to \infty\); this follows directly from
the Müntz–Szász theorem in approximation theory.
In other words, \(P_{i, j}(z)\) can be recovered to arbitrary precision
as long as the segment \(\mathcal{I}\) containing \(i\)
has enough distinct distances.

The second main result is concerned with
obtaining inherent radiance segmentation,
and it requires making assumptions about
the continuous version of the image,
which we call the profile.
Since, for any segment \(\mathcal{I}\) and any \(i \in \mathcal{I}\),
\begin{align*}
  \bar{P}_{i,j}(z)
  = P_{i,j}(z) - P_j(\infty) = \textstyle\int_\Lambda e^{-cz}(L_\mathcal{I} - B)S_j \mathrm{d}\lambda \ ,
\end{align*}
it is convenient to model the profile as a function
\begin{align*}
  p: (x, y) \mapsto \textstyle\int_\Lambda e^{-c(\lambda)z(x, y)}Q[x, y](\lambda)\mathrm{d}\lambda
  \ ,
\end{align*}
where \((x, y) \in [0, 1]^2\) is any point in image coordinates,
while \(z: [0, 1]^2 \to [0, \infty)\) is the scene distance map,
and \(Q[x, y]: \Lambda \to \mathbb{R}\) is some function of wavelength.
For any point \((x, y)\), consider the region
\begin{align*}
  \mathcal{H}(x, y) = \{(x', y') \mid Q[x', y'] = Q[x, y]\}
\end{align*}
under the subspace topology from \(\mathbb{R}^2\).
We say that \(p\) is \emph{connected} if
\(\mathcal{H}(x, y)\) is always connected,
i.e., it is not a union of two disjoint nonempty open subsets.
We say that \(p\) is \emph{discrete} if \(\mathcal{H}(x, y)\) is always semi-open,
i.e., its interior is dense in its closure.
Finally, for any \(\varepsilon > 0\),
we say that \(p\) is \(\varepsilon\)-\emph{separable} if
\begin{align*}
  Q[x_1, y_1] \ne Q[x_2, y_2]
  \implies 
  \lVert g(x_1, y_1) - g(x_2, y_2) \rVert \ge \varepsilon
  \ ,
\end{align*}
where \(g(x, y) = (z(x, y), p(x, y))\).

\begin{theorem*}[\ref{thm:main2}]
  Suppose \(p\) is connected, discrete and \(\varepsilon\)-separable,
  while \(z\) is continuous.
  If \(\{(x_i, y_i)\}_{i=1}^\infty\) is a dense sequence in \([0, 1]^2\),
  then there exists an algorithm which inputs
  \(z(x_i, y_i)\) and \(p(x_i, y_i)\)
  for \(i = 1,\dots,n\),
  and correctly determines,
  for \(k = 1,\dots, n\),
  the sets
  \begin{align*}
    \mathcal{I}_n(k) = \{ 1 \le i \le n \mid Q[x_i, y_i] = Q[x_k, y_k] \}
    \ ,
  \end{align*}
  provided \(n\) is large enough and \(\varepsilon\) is known (or lower-bounded).
\end{theorem*}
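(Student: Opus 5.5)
The plan is to use the simplest algorithm that $\varepsilon$-separability suggests and check that it is correct for large $n$. Write $g_i = g(x_i, y_i) = (z(x_i,y_i), p(x_i,y_i))$; both coordinates are available from the input. Form the graph $G_n$ on $\{1,\dots,n\}$ with an edge between $i$ and $k$ whenever $\lVert g_i - g_k\rVert < \varepsilon$. The algorithm outputs, for each $k$, the connected component of $G_n$ containing $k$. If only a lower bound on $\varepsilon$ is known, we use that bound instead; the argument is unchanged.

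\emph{No false merges.} By $\varepsilon$-separability, an edge $\lVert g_i-g_k\rVert<\varepsilon$ forces $Q[x_i,y_i]=Q[x_k,y_k]$. Since equality of $Q$ is transitive, every component of $G_n$ is contained in some $\mathcal{I}_n(k)$. This holds for every $n$.

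\emph{No false splits.} It remains to show that, for $n$ large, all samples in the same region $\mathcal{H}$ are joined by a path in $G_n$. The steps are as follows.

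\begin{enumerate}[label=(\roman*)]
\item \emph{Finitely many regions.} Assume $Q[x,y]$ is integrable and uniformly bounded in $L^1$, which is implicit in $p$ being well defined. Then $p$ is bounded, because $e^{-cz}\le 1$ when $c\ge 0$, and more generally because $u_1\le e^{-c}\le u_2$ bounds $c$. Also $z$ is continuous on the compact square, so $g$ has bounded image. Points with distinct $Q$ have $g$-values at least $\varepsilon$ apart, so only finitely many distinct $Q$ occur. Hence there are finitely many regions $\mathcal{H}_1,\dots,\mathcal{H}_m$.

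\item \emph{Uniform continuity on each region.} On a fixed region $\mathcal{H}$ the function $Q$ is a fixed $Q_\mathcal{H}$. Since $c$ is bounded, say $|c|\le C$ with $C = \max(|\log u_1|,|\log u_2|)$, and $z$ is bounded, we get
\[
|p(x,y)-p(x',y')|\le \textstyle\int_\Lambda |e^{-cz}-e^{-cz'}|\,|Q_\mathcal{H}|\,\mathrm{d}\lambda \le K\,|z-z'| ,
\]
where $z=z(x,y)$, $z'=z(x',y')$ and $K$ is a constant. Uniform continuity of $z$ on $[0,1]^2$ then gives a $\delta>0$, uniform over the finitely many regions, such that $(x,y),(x',y')\in\mathcal{H}$ with distance less than $\delta$ implies $\lVert g(x,y)-g(x',y')\rVert<\varepsilon$.

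\item \emph{Chain connectivity and anchor cells.} For $a\in\mathcal{H}$, the set of points of $\mathcal{H}$ reachable from $a$ by a $\delta/3$-chain inside $\mathcal{H}$ is relatively open and closed in $\mathcal{H}$. Since $\mathcal{H}$ is connected, that set is all of $\mathcal{H}$. Cover the compact closure $\overline{\mathcal{H}}$ by finitely many balls $B(b_r,\delta/3)$ with $b_r\in\mathcal{H}$. By discreteness, the interior of $\mathcal{H}$ is dense in $\overline{\mathcal{H}}$, so each $B(b_r,\delta/3)\cap\operatorname{int}\mathcal{H}$ is a nonempty open set. The sequence is dense, so it eventually visits each of these finitely many open sets, over all regions. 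Fix $N$ beyond all such visits. For $n\ge N$, each ball then contains a sample lying in $\mathcal{H}$; call it an anchor.

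\item \emph{Assembling the path.} Any sample $(x_k,y_k)\in\mathcal{H}$ lies in some ball $B(b_r,\delta/3)$, so it is within $2\delta/3$ of that ball's anchor, giving an edge. Two points of $\mathcal{H}$ at distance less than $\delta/3$ lie in balls whose anchors are within $\delta$ of each other, again giving an edge. Now take any two samples in $\mathcal{H}$ and a $\delta/3$-chain in $\mathcal{H}$ between them. Replacing each chain point by an anchor of a ball containing it yields a path in $G_n$.
\end{enumerate}

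The main obstacle is step (iii). Connectedness is a topological property, while the samples only see finitely many points. Without discreteness, a region could be, for instance, a curve, and would contain no samples at all, or samples that fail to link. The semi-openness assumption is exactly what guarantees that every $\delta/3$-ball meeting $\overline{\mathcal{H}}$ contains an open piece of $\mathcal{H}$, so that the dense sequence eventually populates it. A secondary technical point is the regularity of $Q$ and $c$ needed in steps (i) and (ii). These I would state as standing assumptions: $Q$ uniformly $L^1$-bounded, and $c$ bounded via $u_1,u_2$.
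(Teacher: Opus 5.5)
Your algorithm is the same as the paper's. The greedy growth $\mathcal{J}_{t+1}=\mathcal{J}_t\cup\{\arg\min\cdots\}$, stopped at the first $T$ whose minimal cross-distance is $\ge\varepsilon$, produces exactly the connected component of $k$ in your threshold graph $G_n$. Your ``no false merges'' step is also the paper's use of $\varepsilon$-separability.

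The two proofs differ in ``no false splits''. The paper works entirely in $g$-space:
\begin{itemize}
\item $g(\mathcal{H})$ is connected, being a continuous image of a connected set, and it is bounded.
\item The sampled values are dense in $g(\mathcal{H})$, by semi-openness plus continuity of $g$ on $\mathcal{H}$.
\item Hence finitely many $\varepsilon/2$-balls centered at samples in $\mathcal{H}$ cover $g(\mathcal{H})$ (\cref{lem:cover}).
\item Connectedness of $g(\mathcal{H})$ then forbids splitting those samples into two groups at mutual distance $\ge\varepsilon$ (\cref{lem:intersection}).
\end{itemize}
You work in image coordinates instead. You use $\delta$-chain connectivity of $\mathcal{H}$, anchor samples in small balls, and uniform continuity of $g$ on each region to turn spatial closeness into $g$-closeness.

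The paper's route is more economical. It needs only pointwise continuity of $g$ on $\mathcal{H}$ and no Lipschitz estimate, because connectedness is transported to $g$-space for free. Your route needs uniform continuity, but it supplies something the paper's proof glosses over. The paper takes $n$ large enough for the single region $\mathcal{H}(x_k,y_k)$, whereas the statement asks for one $n$ that works for all $k\le n$. Your step (i) gives finitely many regions and hence a single $N$, which closes this gap.

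The price is hypotheses that are not in the statement: bounded $c$ and uniformly $L^1$-bounded $Q$. These can be weakened in two ways:
\begin{itemize}
\item Uniform continuity of $p$ on $\mathcal{H}$ already follows from continuity of $t\mapsto\int_\Lambda e^{qt}Q_{\mathcal{H}}\,\mathrm{d}\lambda$ on the compact range of $z$, which is the same regularity the paper uses implicitly.
\item Finiteness of regions only needs $p$ bounded.
\end{itemize}
Some such bound is genuinely required for finiteness, though. Take $z$ constant and $p$ taking the value $m$ on the $m$-th of countably many wedge-shaped regions. This is connected, discrete and $1$-separable, yet has infinitely many regions.

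There is one small arithmetic slip in step (iv). The anchors of the balls containing two chain points at distance $<\delta/3$ can be nearly $5\delta/3$ apart, not $\delta$, since each anchor is within $2\delta/3$ of its chain point. Using balls of radius $\delta/5$ and chain steps $<\delta/5$ fixes this, because $2\delta/5+\delta/5+2\delta/5=\delta$ with strict inequalities throughout.
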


In essence, \cref{thm:main2} claims that
any profile \(p\) satisfying certain assumptions
makes it possible to algorithmically determine
inherent radiance segmentation for any associated image,
provided the camera resolution is sufficiently high.
The assumption of \(p\) being connected intuitively means that
any uniformly colored region \(\mathcal{H}(x, y)\)
in the scene with the water removed appears as a single connected patch.
The assumption of \(p\) being discrete
ensures that each \(\mathcal{H}(x, y)\) is sampled an infinite number of times in the limit \(n \to \infty\),
while \(p\) being \(\varepsilon\)-separable strongly forbids metamerism.
Finally, the assumption of \(z\) being continuous avoids many mathematically pathological scenarios
of little practical relevance,
and generalizing to piecewise continuous \(z\) seems possible.

\subsection{Reader's guide}

In \cref{sec:imf} we outline a derivation of \eqref{eq:imf} for completeness.
In \cref{sec:ill-posedness}, we describe the solution space for color restoration in the fully general setting,
formally verifying that it is an ill-posed problem.
In \cref{sec:att-funcs}, we investigate a class of functions \(\mathbb{R} \to \mathbb{R}\) that we call \emph{attenuating},
and we prove certain extrapolation properties on these.
In \cref{sec:using-irs}, we show how inherent radiance segmentation
reduces color restoration to extrapolation of said functions,
proving \cref{thm:main1-eval}.
In \cref{sec:obtaining-irs}, we show how to obtain inherent radiance segmentation in an idealized setting,
proving \cref{thm:main2}.
Finally, we give a few concluding remarks in \cref{sec:conclusion}.

\section{Image formation}
\label{sec:imf}

Here, we briefly explain the derivation of \eqref{eq:imf},
where an observer is looking at a point in the scene
along a horizontal line of sight of length \(z\).
A more general derivation may be found in \cite{preisendorfer1976hydrologic}.
For any wavelength \(\lambda\),
we denote the apparent radiance at distance \(z\) by \(F(\lambda, z)\),
so that \(F(\lambda, 0) = L(\lambda)\) is the inherent radiance.
In the radiative transfer equation,
the instantaneous radiance loss at \(z\)
due to absorption and scattering is \(c(\lambda)F(\lambda, z)\),
where \(c\) is the beam attenuation coefficient.
The instantaneous radiance gain due to in-scattering from other directions,
as well as potential emission,
we denote by \(F_+(\lambda, z)\),
so that
\begin{align*}
  \frac{\partial F}{\partial z} = -cF + F_+
  \ .
\end{align*}
Using the integrating factor \(e^{cz}\),
the general solution to this linear equation is found to be
\begin{align*}
  F(\lambda, z) = e^{-c(\lambda)z}F(\lambda, 0) + \textstyle\int_0^ze^{-c(\lambda)(z - \tilde{z})}F_+(\lambda, \tilde{z})\mathrm{d}\tilde{z}
  \ .
\end{align*}
In general, \(F_+\) depends on the geometry of the scene and can be very complicated;
however,
we employ the common simplifying assumption that it is constant in \(z\), which yields
\begin{align*}
  F(\lambda, z)
  &= e^{-c(\lambda)z}F(\lambda, 0) + (1 - e^{-c(\lambda)z})\frac{F_+(\lambda)}{c(\lambda)} \\
  &= e^{-c(\lambda)z}L(\lambda) + (1 - e^{-c(\lambda)z})B(\lambda)
  \ ,
\end{align*}
where \(B(\lambda) = F_+(\lambda)/c(\lambda)\); this assumption requires a horizontal view.
If the signal \(F(\lambda, z)\) is captured by a camera with spectral sensitivities \(S_j\),
where \(j\) is the channel index,
the resulting pixel intensities become
\begin{align*}
  P_{i,j}(z)
  = \textstyle\int_\Lambda (e^{-cz}L_i + (1 - e^{-cz})B) S_j \mathrm{d}\lambda
  \ ,
\end{align*}
where \(L_i\) is the inherent radiance associated with pixel \(i\).

\section{Ill-posedness}
\label{sec:ill-posedness}

In this section,
we explain why
\begin{align*}
  P_{i,j}(z) = \textstyle\int_\Lambda(e^{-cz}L_i + (1 - e^{-cz})B)S_j\mathrm{d}\lambda
\end{align*}
cannot be estimated from the \(P_{i,j}(z_i)\) and the \(z_i\) alone,
unless \(z = z_i\) or \(cS_j \equiv 0\).
In fact, this is true even when the \(S_j\) are known,
hence we assume here that the only unknowns are \(c,B\) and \(L_i\).
The following definition will allow us to formally reason about solution spaces:

\begin{definition}
  For any \(z \in \mathbb{R}\) and \(c, B, L: \Lambda \to \mathbb{R}\), define
  \begin{align*}
    F^{(z)}(c, B, L)
    &= e^{-cz}L + (1 - e^{-cz})B \ , \\
    P_j^{(z)}(c, B, L)
    &= \textstyle\int_\Lambda F^{(z)}(c, B, L)S_j\mathrm{d}\lambda \ , \\
    \mathcal{E}^{(z)}(c, B, L)
    &= \{ (\hat{c}, \hat{B}, \hat{L}) \mid P_j^{(z)}(\hat{c}, \hat{B}, \hat{L}) 
    = P_j^{(z)}(c, B, L) \ \forall j\}
      .
  \end{align*}
  For any \(z_1,z_2 \in \mathbb{R}\), we might write
  \begin{align*}
    &P_j^{(z_1)}(\mathcal{E}^{(z_2)}(c, B, L)) \\
    &= \{ P_j^{(z_1)}(\hat{c}, \hat{B}, \hat{L}) \mid (\hat{c}, \hat{B}, \hat{L}) \in \mathcal{E}^{(z_2)}(c, B, L)\}
    \ .
  \end{align*}
\end{definition}
Intuitively, \(\mathcal{E}^{(z)}(c, B, L)\) is the set of all tuples \((\hat{c}, \hat{B}, \hat{L})\)
that could have produced the pixel intensities \(P_j^{(z)}(c, B, L)\).
In the following lemma, we give an explicit description of this set:

\begin{lemma}
  \label{lem:desc-sol-space}
  For any \(z \in \mathbb{R}\) and \(c, B, L: \Lambda \to \mathbb{R}\), we have \( \mathcal{E}^{(z)}(c, B, L) = \mathcal{E}_0^{(z)}(c, B, L)\),
  where \(\mathcal{E}_0^{(z)}(c, B, L)\) is
  \begin{align*}
    \{ (\hat{c}, \hat{B}, \hat{L}) \ | \
      &\hat{L} = e^{\hat{c}z}F^{(z)}(c,B,L) + (1 - e^{\hat{c}z})\hat{B} + L^\perp \ , \\
      &\text{where } \textstyle\int_\Lambda e^{-\hat{c}z}L^\perp S_j \mathrm{d}\lambda = 0
      \quad \forall j\}
      \ .
  \end{align*}
\end{lemma}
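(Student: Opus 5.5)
We prove the set equality \(\mathcal{E}^{(z)}(c,B,L) = \mathcal{E}_0^{(z)}(c,B,L)\) by double inclusion. Both directions rest on one pointwise identity. For any \(\hat{c},\hat{B}:\Lambda\to\mathbb{R}\), the map \(\hat{L} \mapsto F^{(z)}(\hat{c},\hat{B},\hat{L}) = e^{-\hat{c}z}\hat{L} + (1-e^{-\hat{c}z})\hat{B}\) is affine and invertible, since \(e^{-\hat{c}z}>0\). Its inverse is
\begin{align*}
  \hat{L} = e^{\hat{c}z}\hat{F} + (1 - e^{\hat{c}z})\hat{B} \ ,
\end{align*}
because \(e^{\hat{c}z}(\hat{F} - (1-e^{-\hat{c}z})\hat{B}) = e^{\hat{c}z}\hat{F} - (e^{\hat{c}z}-1)\hat{B}\). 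The question is therefore only which functions \(\hat{F}\) give the same channel integrals as \(F := F^{(z)}(c,B,L)\).

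\textbf{Inclusion \(\mathcal{E}_0 \subseteq \mathcal{E}\).} Take \(\hat{L} = e^{\hat{c}z}F + (1-e^{\hat{c}z})\hat{B} + L^\perp\), where \(L^\perp\) satisfies the stated orthogonality conditions. Substituting this into \(F^{(z)}(\hat{c},\hat{B},\hat{L})\), the \(\hat{B}\) terms cancel pointwise:
\begin{align*}
  e^{-\hat{c}z}(1-e^{\hat{c}z})\hat{B} + (1-e^{-\hat{c}z})\hat{B} = 0 \ .
\end{align*}
This leaves \(F^{(z)}(\hat{c},\hat{B},\hat{L}) = F + e^{-\hat{c}z}L^\perp\). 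Multiplying by \(S_j\) and integrating, the \(L^\perp\) contribution vanishes by hypothesis. Hence \(P_j^{(z)}(\hat{c},\hat{B},\hat{L}) = P_j^{(z)}(c,B,L)\) for every \(j\).

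\textbf{Inclusion \(\mathcal{E} \subseteq \mathcal{E}_0\).} Given \((\hat{c},\hat{B},\hat{L}) \in \mathcal{E}^{(z)}(c,B,L)\), define
\begin{align*}
  L^\perp := \hat{L} - e^{\hat{c}z}F - (1-e^{\hat{c}z})\hat{B} \ .
\end{align*}
The displayed form of \(\hat{L}\) then holds by construction, and we only need to check orthogonality. By the same cancellation as above, \(e^{-\hat{c}z}L^\perp = F^{(z)}(\hat{c},\hat{B},\hat{L}) - F\) pointwise. Integrating against \(S_j\) gives \(P_j^{(z)}(\hat{c},\hat{B},\hat{L}) - P_j^{(z)}(c,B,L)\), which is zero for every \(j\) by membership in \(\mathcal{E}^{(z)}(c,B,L)\).

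\textbf{Main obstacle.} There is little real difficulty here. The key is to see that defining \(L^\perp\) as a residual makes the second inclusion tautological, apart from the orthogonality check. The one point needing care is integrability: the integrals defining \(P_j^{(z)}\) are implicitly assumed to exist. Linearity of the integral is used when splitting \(\int_\Lambda (F + e^{-\hat{c}z}L^\perp)S_j\,\mathrm{d}\lambda\) into two terms, and this split is legitimate exactly when both pieces are integrable. I would state this standing assumption explicitly, or include it in the definition of \(\mathcal{E}\), so that all manipulations are valid.
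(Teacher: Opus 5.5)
Your proof is correct and follows the paper's double-inclusion argument essentially verbatim. The first inclusion is the same substitution and cancellation. In the second, you define \(L^\perp\) directly as the residual, whereas the paper first writes \(F^{(z)}(\hat{c},\hat{B},\hat{L}) = F^{(z)}(c,B,L) + F^\perp\) and then sets \(L^\perp = e^{\hat{c}z}F^\perp\); this is the same construction, and your integrability caveat is a fair addition to assumptions the paper leaves tacit.
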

\begin{proof}
  If \((\hat{c}, \hat{B}, \hat{L}) \in \mathcal{E}_0^{(z)}(c, B, L)\), then
  \begin{align*}
    \hat{L}
    &= e^{\hat{c}z}F^{(z)}(c,B,L) + (1 - e^{\hat{c}z})\hat{B} + L^\perp \ , \\
    P_j^{(z)}(\hat{c}, \hat{B}, \hat{L})
    &= \textstyle\int_\Lambda F^{(z)}(\hat{c}, \hat{B}, \hat{L})S_j\mathrm{d}\lambda \\
    &= \textstyle\int_\Lambda (e^{-\hat{c}z}\hat{L} + (1 - e^{-\hat{c}z})\hat{B})S_j\mathrm{d}\lambda \\
    &= \textstyle\int_\Lambda (F^{(z)}(c, B, L) + e^{-\hat{c}z}L^\perp)S_j\mathrm{d}\lambda \\
    &= P_j^{(z)}(c, B, L)
      \ ,
  \end{align*}
  hence \(\mathcal{E}_0^{(z)}(c, B, L) \subseteq \mathcal{E}^{(z)}(c, B, L)\).

  On the other hand, if \((\hat{c}, \hat{B}, \hat{L}) \in \mathcal{E}^{(z)}(c, B, L)\), then
  \begin{align*}
    P_j^{(z)}(\hat{c}, \hat{B}, \hat{L})
    &= P_j^{(z)}(c, B, L) \ , \\
    \textstyle\int_\Lambda F^{(z)}(\hat{c}, \hat{B}, \hat{L})S_j\mathrm{d}\lambda
    &=
      \textstyle\int_\Lambda F^{(z)}(c, B, L)S_j\mathrm{d}\lambda \ , \\
    F^{(z)}(\hat{c}, \hat{B}, \hat{L})
    &= F^{(z)}(c, B, L) + F^\perp
  \end{align*}
  for some \(F^\perp\) satisfying \(\textstyle\int_\Lambda F^\perp S_j\mathrm{d}\lambda = 0\) for all \(j\).
  Letting \(L^\perp = e^{\hat{c}z}F^\perp\), we have
  \begin{align*}
    e^{-\hat{c}z}\hat{L} + (1 - e^{-\hat{c}z})\hat{B}
    & = F^{(z)}(c, B, L) + e^{-\hat{c}z}L^\perp \ , \\
    \hat{L}
    &= e^{\hat{c}z}F^{(z)}(c, B, L) + (1 - e^{\hat{c}z})\hat{B} + L^\perp \ , \\
      \textstyle\int_\Lambda e^{-\hat{c}z}L^\perp S_j \mathrm{d}\lambda &= \textstyle\int_\Lambda F^\perp S_j \mathrm{d}\lambda = 0 \ ,
  \end{align*}
  hence \(\mathcal{E}^{(z)}(c, B, L) \subseteq \mathcal{E}_0^{(z)}(c, B, L)\).
\end{proof}

The fact that all members of \(\mathcal{E}^{(z)}(c, B, L)\) yield identical pixel intensities
is captured in the following trivial lemma
whose sole purpose is to be contrasted with \cref{thm:ill-posedness} that follows it.
\begin{lemma}
  For any \(z \in \mathbb{R}\) and \(c, B, L: \Lambda \to \mathbb{R}\), we have
  \(P_j^{(z)}(\mathcal{E}^{(z)}(c, B, L)) = \{P_j^{(z)}(c, B, L)\}\).
\end{lemma}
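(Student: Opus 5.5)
The statement is a set equality, so I will prove it by checking two inclusions.

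For the inclusion \(\{P_j^{(z)}(c, B, L)\} \subseteq P_j^{(z)}(\mathcal{E}^{(z)}(c, B, L))\), it suffices to observe that \((c, B, L)\) itself belongs to \(\mathcal{E}^{(z)}(c, B, L)\). Indeed, the defining condition \(P_j^{(z)}(c, B, L) = P_j^{(z)}(c, B, L)\) holds trivially for every channel \(j\). Equivalently, in terms of \cref{lem:desc-sol-space}, take \(\hat{c} = c\), \(\hat{B} = B\) and \(L^\perp = 0\); then \(\hat{L} = e^{cz}F^{(z)}(c,B,L) + (1-e^{cz})B = L\). In either form, the image of \(\mathcal{E}^{(z)}(c, B, L)\) under \(P_j^{(z)}\) contains \(P_j^{(z)}(c, B, L)\), and in particular it is nonempty.

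For the reverse inclusion, let \(w \in P_j^{(z)}(\mathcal{E}^{(z)}(c, B, L))\). Then \(w = P_j^{(z)}(\hat{c}, \hat{B}, \hat{L})\) for some \((\hat{c}, \hat{B}, \hat{L}) \in \mathcal{E}^{(z)}(c, B, L)\). The definition of \(\mathcal{E}^{(z)}\) gives \(P_j^{(z)}(\hat{c}, \hat{B}, \hat{L}) = P_j^{(z)}(c, B, L)\) for all \(j\), so \(w = P_j^{(z)}(c, B, L)\). Alternatively, one could use the characterization in \cref{lem:desc-sol-space} and repeat the computation from its first inclusion. That route is unnecessary, since the claim follows directly from the definition.

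Neither step presents a real obstacle; the result follows immediately from the definitions. The only point needing care is the convention that \(P_j^{(z)}(\mathcal{E}^{(z)}(c,B,L))\) means the image set for a fixed channel \(j\). With that reading, it is the singleton stated.
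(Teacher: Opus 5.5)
Your proof is correct and takes essentially the same route as the paper, which unwinds the definition of \(\mathcal{E}^{(z)}(c,B,L)\) in a short chain of set equalities. Your explicit check that \((c,B,L)\in\mathcal{E}^{(z)}(c,B,L)\), which shows the image is nonempty, makes precise the point the paper's final equality uses only tacitly.
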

\begin{proof}
  \begin{align*}
    &P_j^{(z)}(\mathcal{E}^{(z)}(c, B, L)) \\
    &\doteq \{ P_j^{(z)}(\hat{c}, \hat{B}, \hat{L}) \mid (\hat{c}, \hat{B}, \hat{L}) \in \mathcal{E}^{(z)}(c, B, L)\} \\
    &= \{P_j^{(z)}(\hat{c}, \hat{B}, \hat{L}) \mid P_j^{(z)}(\hat{c}, \hat{B}, \hat{L}) = P_j^{(z)}(c, B, L)\} \\
    &= \{P_j^{(z)}(c, B, L)\} \ .      
  \end{align*}
\end{proof}
Given only the pixel value \(P_j^{(z_1)}(c, B, L)\) and the associated distance \(z_1\),
by definition, we cannot know anything about the tuple \((c, B, L)\) besides its membership in \(\mathcal{E}^{(z_1)}(c, B, L)\).
If we then wish to estimate \(P_j^{(z_2)}(c, B, L)\) at some other distance \(z_2 \ne z_1\),
the possible candidates are exactly the elements of \(P_j^{(z_2)}(\mathcal{E}^{(z_1)}(c, B, L))\).
Unfortunately, this set contains all real numbers:
\begin{theorem}
  \label{thm:ill-posedness}
  For any \(z_1,z_2 \in \mathbb{R}\) and \(c, B, L: \Lambda \to \mathbb{R}\), we have
  \(P_j^{(z_2)}(\mathcal{E}^{(z_1)}(c, B, L)) = \mathbb{R}\)
  if \(z_1 \ne z_2\) and \(cS_j \not\equiv 0\).
\end{theorem}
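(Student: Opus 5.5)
The plan is to use the explicit description of the solution space from \cref{lem:desc-sol-space}. I will exhibit a one-parameter family inside \(\mathcal{E}^{(z_1)}(c, B, L)\) on which \(P_j^{(z_2)}\) is an affine, non-constant function of the parameter, so that its image is all of \(\mathbb{R}\). The inclusion \(\subseteq \mathbb{R}\) is trivial, so only surjectivity needs work.

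\textbf{Step 1: the family.} I keep the true attenuation, \(\hat{c} = c\), and take \(L^\perp = 0\), which satisfies the orthogonality condition of \cref{lem:desc-sol-space} trivially. The backscatter \(\hat{B}\) is left free. By \cref{lem:desc-sol-space}, every triple
\begin{align*}
  (c, \hat{B}, \hat{L}), \qquad \hat{L} = e^{cz_1}F^{(z_1)}(c,B,L) + (1 - e^{cz_1})\hat{B},
\end{align*}
lies in \(\mathcal{E}^{(z_1)}(c, B, L)\).

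\textbf{Step 2: compute the prediction at \(z_2\).} Substituting \(\hat{L}\) into \(F^{(z_2)}\), the \(\hat{B}\)-terms combine as
\begin{align*}
  F^{(z_2)}(c, \hat{B}, \hat{L})
  = e^{c(z_1 - z_2)}F^{(z_1)}(c,B,L) + \bigl(1 - e^{c(z_1 - z_2)}\bigr)\hat{B}.
\end{align*}
Write \(h = (1 - e^{c(z_1 - z_2)})S_j\). Then
\begin{align*}
  P_j^{(z_2)}(c, \hat{B}, \hat{L}) = A + \textstyle\int_\Lambda h\hat{B}\,\mathrm{d}\lambda,
\end{align*}
where \(A = \int_\Lambda e^{c(z_1-z_2)}F^{(z_1)}(c,B,L)S_j\,\mathrm{d}\lambda\) does not depend on \(\hat{B}\).

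\textbf{Step 3: show \(h \not\equiv 0\) and choose \(\hat{B}\).} Since \(z_1 \ne z_2\), the factor \(1 - e^{c(\lambda)(z_1 - z_2)}\) vanishes exactly where \(c(\lambda) = 0\). Hence \(h(\lambda) \neq 0\) wherever \(c(\lambda)S_j(\lambda) \ne 0\), and the hypothesis \(cS_j \not\equiv 0\) gives \(h \not\equiv 0\). Take \(\hat{B} = t\,h\) with \(t \in \mathbb{R}\). This gives
\begin{align*}
  P_j^{(z_2)} = A + t\textstyle\int_\Lambda h^2\,\mathrm{d}\lambda,
\end{align*}
which sweeps out all of \(\mathbb{R}\) as \(t\) ranges over \(\mathbb{R}\), provided \(\int_\Lambda h^2 > 0\).

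\textbf{Main obstacle.} The only delicate point is measure-theoretic. One has to interpret \(cS_j \not\equiv 0\) as ``nonzero on a set of positive measure'' and assume enough integrability that \(\int_\Lambda h^2\) is finite and positive and that all the integrals above exist. This is the implicit regularity under which the paper's integrals make sense.

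If \(h^2\) were not integrable, I would instead take \(\hat{B} = t\,\mathbf{1}_E\,\mathrm{sgn}(h)\), where \(E\) is a bounded-measure set on which \(|h|\) is bounded and positive. This gives \(\int_\Lambda h\hat{B} = t\int_E|h| \neq 0\), and the same affine sweep then covers \(\mathbb{R}\).
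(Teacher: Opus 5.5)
Your proposal is correct and follows the paper's proof: keep \(\hat{c}=c\), take \(L^\perp=0\), use \cref{lem:desc-sol-space} to obtain \(P_j^{(z_2)} = A + \int_\Lambda (1-e^{c(z_1-z_2)})\hat{B}S_j\,\mathrm{d}\lambda\), and scale \(\hat{B}\) freely. The only difference is that you make the paper's step ``there exists some \(\hat{B}\) for which the last term does not vanish'' explicit, via \(\hat{B}=t\,h\) or the indicator fallback (where \(E\) must have positive measure); this tightens the argument rather than changing the route.
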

\begin{proof}
  For any \(\hat{B}:\Lambda \to \mathbb{R}\), let
  \begin{align*}
    \hat{L} = e^{cz_1}F^{(z_1)}(c,B,L) + (1 - e^{cz_1})\hat{B} \ .
  \end{align*}
  \cref{lem:desc-sol-space} guarantees that \((c, \hat{B}, \hat{L}) \in \mathcal{E}^{(z_1)}(c, B, L)\).
  Consequently, \(P_j^{(z_2)}(\mathcal{E}^{(z_1)}(c, B, L))\) contains
  \begin{align*}
    &P_j^{(z_2)}(c, \hat{B}, \hat{L}) \\
    &= \textstyle\int_\Lambda F^{(z_2)}(c, \hat{B}, \hat{L}) S_j \mathrm{d}\lambda \\
    &= \textstyle\int_\Lambda(e^{-cz_2}\hat{L} + (1 - e^{-cz_2})\hat{B}) S_j \mathrm{d}\lambda \\
    &= \textstyle\int_\Lambda(e^{c(z_1 - z_2)}F^{(z_1)}(c, B, L) \\
      &\quad + e^{-cz_2}(1 - e^{cz_1})\hat{B}
      + (1 - e^{-cz_2})\hat{B}) S_j \mathrm{d}\lambda \\
    &= \textstyle\int_\Lambda e^{c(z_1 - z_2)}F^{(z_1)}(c, B, L) S_j \mathrm{d}\lambda
      + \textstyle\int_\Lambda(1 - e^{c(z_1 - z_2)})\hat{B} S_j \mathrm{d}\lambda
      \ .
  \end{align*}
  The first term in the above sum is completely determined
  while the last depends on the arbitrary \(\hat{B}\).
  Since \(z_1 \ne z_2\) and \(cS_j \not\equiv 0\),
  there exists some \(\hat{B}\) for which the last term does not vanish.
  Moreover, by freely scaling this \(\hat{B}\),
  we can force \(P_j^{(z_2)}(c, \hat{B}, \hat{L})\) to take on any real value,
  hence \(P_j^{(z_2)}(\mathcal{E}^{(z_1)}(c, B, L)) = \mathbb{R}\).
\end{proof}

In conclusion,
knowing only the pixel intensities \(P_{i,j}(z_i)\) and the distances \(z_i\)
gives no information about \(P_{i,j}(z)\) unless \(z = z_i\) or \(cS_j \equiv 0\).

\section{Attenuating functions}
\label{sec:att-funcs}

In \cref{sec:ill-posedness},
we verified that the task of estimating \(P_{i,j}(z)\)
from \(P_{i,j}(z_i)\) and \(z_i\) is impossible without additional information.
In \cref{sec:using-irs},
we shall see that knowledge of inherent radiance segmentation
reduces this task to extrapolation of attenuating functions.
In the current section,
we explain how such extrapolation can be carried out
as a linear combination of known quantities.
\begin{definition}
  \label{def:att-func}
  A function \(P: [0, \infty) \to \mathbb{R}\) is \emph{attenuating} if it can be written as
  \begin{align*}
    P(z)
    = \textstyle\int_\Lambda e^{qz}Q \mathrm{d}\lambda
    \ ,
  \end{align*}
  where \(q,Q:\Lambda \to \mathbb{R}\) on some bounded interval \(\Lambda \subset \mathbb{R}\).
  As shorthand notation, we denote this by \(P = P[q, Q]\).
\end{definition}
We treat \(q\) and \(Q\) as hidden parameters defining \(P\),
however, we will assume that some upper and lower bounds on \(q\) are known across \(\Lambda\).
In the context of underwater color restoration,
this corresponds to knowing upper and lower bounds on
the beam attenuation coefficient \(c\),
which is reasonable due to Jerlov water types being well-studied \cite{jerlov1976marine}.
Once such bounds are known,
then the possible candidates for \(P(z)\) become restricted to a finite-length interval,
provided some evaluations or derivatives of \(P\) are known at some \(z_i \ne z\).
The more of such ``information elsewhere'' is available,
the smaller this uncertainty interval becomes before vanishing in the limit.
Evaluations are discussed in \cref{subsec:evaluations}
and are of most relevance to underwater color restoration,
while derivatives are discussed in \cref{subsec:derivatives}
and are included for completeness.

\subsection{Extrapolation from evaluations}
\label{subsec:evaluations}
We begin by defining some auxiliary quantities
useful for describing the error for extrapolation from evaluations.
\begin{definition}
  \label{def:delta-sample}
  For any \(\mathcal{Z} = \{z_i\}_{i=1}^n \subset [0, \infty)\)
  and any \(\mathcal{A} = \{\alpha_i\}_{i=1}^n\)
  with \(\alpha_i:[0, \infty) \to \mathbb{R}\),
  define
  \begin{align*}
    \Psi_{\mathcal{Z},\mathcal{A}}(z, u) = \textstyle\sum_{i=1}^n\alpha_i(z)u^{z_i} - u^z
    \ ,
  \end{align*}
  where \(u\) is a non-negative real-valued variable.
  Furthermore, for any bounds \(u_1,u_2 \in \mathbb{R}\) with \(0 \le u_1 < u_2\), define
  \begin{align*}
    \Psi_{\mathcal{Z},\mathcal{A}}^{u_1,u_2}(z)
    = \max_{u_1 \le u \le u_2} |\Psi_{\mathcal{Z},\mathcal{A}}(z, u)|
    \ .
  \end{align*}
\end{definition}
As we shall see,
any choice of the coefficients \(\alpha_i(z)\)
yields the approximation \(P(z) \approx \sum_{i=1}^n\alpha_i(z)P(z_i)\)
with a bounded error,
though making these bounds tight
requires minimizing \(\Psi_{\mathcal{Z},\mathcal{A}}^{u_1,u_2}(z)\)
with respect to \(\mathcal{A} = \{\alpha_i\}_{i=1}^n\) at \(z\).
\begin{lemma}
  \label{lem:est-by-eval}
  Let \(P = P[q, Q]\),
  \(\mathcal{Z} = \{z_i\}_{i=1}^n \subset [0, \infty)\)
  and \(\mathcal{A} = \{\alpha_i\}_{i=1}^n\) with \(\alpha_i: [0, \infty) \to \mathbb{R}\).
  Then
  \begin{align*}
    \textstyle\sum_{i=1}^n\alpha_i(z)P(z_i)
    &= P(z) + \textstyle\int_\Lambda \Psi_{\mathcal{Z},\mathcal{A}}(z, e^{q})Q \mathrm{d}\lambda
      \ .
  \end{align*}
\end{lemma}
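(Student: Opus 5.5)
The identity follows by expanding both sides using \cref{def:att-func} and exploiting linearity of the integral over \(\Lambda\). No estimates are needed; the statement is an exact algebraic identity.

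First, substitute the representation \(P(z_i) = \int_\Lambda e^{q z_i} Q \,\mathrm{d}\lambda\) into the left-hand side. Since the sum is finite and each \(\alpha_i(z)\) is a scalar independent of \(\lambda\), the sum and coefficients can be pulled inside the integral:
\begin{align*}
  \textstyle\sum_{i=1}^n \alpha_i(z) P(z_i)
  = \textstyle\int_\Lambda \Big(\sum_{i=1}^n \alpha_i(z) e^{q z_i}\Big) Q \,\mathrm{d}\lambda \ .
\end{align*}
Next, write \(e^{q z_i} = (e^{q})^{z_i}\) pointwise in \(\lambda\). This is legitimate because \(e^{q(\lambda)} > 0\) is a valid non-negative value of the variable \(u\) in \cref{def:delta-sample}. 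Then add and subtract \((e^{q})^{z} = e^{qz}\) inside the bracket. By the definition of \(\Psi_{\mathcal{Z},\mathcal{A}}\), the bracket equals
\begin{align*}
  e^{qz} + \Psi_{\mathcal{Z},\mathcal{A}}(z, e^{q}) \ .
\end{align*}
Splitting the integral once more gives \(\int_\Lambda e^{qz} Q \,\mathrm{d}\lambda = P(z)\) plus \(\int_\Lambda \Psi_{\mathcal{Z},\mathcal{A}}(z, e^{q}) Q \,\mathrm{d}\lambda\), which is the claimed identity.

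The only point needing care is integrability. The splitting of integrals presupposes that each \(e^{q z_i} Q\) and \(e^{qz} Q\) is integrable over \(\Lambda\). This is implicit in \cref{def:att-func}, since \(P\) is assumed well defined on \([0,\infty)\). Given that, the combination \(\Psi_{\mathcal{Z},\mathcal{A}}(z, e^{q}) Q\) is a finite linear combination of integrable functions and is therefore integrable, so linearity applies.
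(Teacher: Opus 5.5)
Your proposal is correct and is essentially the paper's own proof: substitute the integral representation of \(P(z_i)\), move the finite sum inside the integral, rewrite \(\sum_i\alpha_i(z)e^{qz_i}\) as \(e^{qz} + \Psi_{\mathcal{Z},\mathcal{A}}(z, e^{q})\), and split the integral to recover \(P(z)\). Your remark on integrability is a small point of care that the paper leaves implicit.
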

\begin{proof}
  \begin{align*}
    \textstyle\sum_i\alpha_i(z)P(z_i)
    &= \textstyle\sum_i\alpha_i(z) \textstyle\int_\Lambda e^{qz_i}Q \mathrm{d}\lambda \\
    &= \textstyle\int_\Lambda \textstyle\sum_i\alpha_i(z)e^{qz_i}Q \mathrm{d}\lambda \\
    &= \textstyle\int_\Lambda (e^{qz} + \Psi_{\mathcal{Z},\mathcal{A}}(z, e^{q}))Q \mathrm{d}\lambda \\
    &= \textstyle\int_\Lambda e^{qz}Q \mathrm{d}\lambda
      + \textstyle\int_\Lambda \Psi_{\mathcal{Z},\mathcal{A}}(z, e^{q})Q \mathrm{d}\lambda \\
    &= P(z) + \textstyle\int_\Lambda \Psi_{\mathcal{Z},\mathcal{A}}(z, e^{q})Q \mathrm{d}\lambda
      \ .
  \end{align*}
\end{proof}
\cref{lem:est-by-eval} isolates the error term for the extrapolation,
but it does not immediately allow us to bound it.
For this, we need upper and lower bounds on \(e^{q(\lambda)}\) for all \(\lambda \in \Lambda\).
\begin{lemma}
  \label{lem:bound-by-eval}
  Let \(P = P[q, Q]\) and \(u_1,u_2 \in \mathbb{R}\)
  such that \(u_1 \le e^{q(\lambda)} \le u_2\) for all \(\lambda \in \Lambda\).
  Then
  \begin{align*}
    |\textstyle\int_\Lambda \Psi_{\mathcal{Z},\mathcal{A}}(z, e^{q})Q \mathrm{d}\lambda|
    \le \Psi_{\mathcal{Z},\mathcal{A}}^{u_1,u_2}(z)\textstyle\int_\Lambda |Q| \mathrm{d}\lambda
    \ .
  \end{align*}
\end{lemma}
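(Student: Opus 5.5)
The plan is a direct pointwise estimate followed by integration. Fix \(z\), and for each \(\lambda \in \Lambda\) set \(u = e^{q(\lambda)}\). By hypothesis \(u_1 \le u \le u_2\), so \(u\) lies in the interval over which the maximum in \cref{def:delta-sample} is taken. Hence
\begin{align*}
  |\Psi_{\mathcal{Z},\mathcal{A}}(z, e^{q(\lambda)})|
  \le \max_{u_1 \le u \le u_2}|\Psi_{\mathcal{Z},\mathcal{A}}(z, u)|
  = \Psi_{\mathcal{Z},\mathcal{A}}^{u_1,u_2}(z)
\end{align*}
for every \(\lambda \in \Lambda\).

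Next, pass the absolute value inside the integral, use the pointwise bound above, and pull the constant \(\Psi_{\mathcal{Z},\mathcal{A}}^{u_1,u_2}(z)\) out:
\begin{align*}
  |\textstyle\int_\Lambda \Psi_{\mathcal{Z},\mathcal{A}}(z, e^{q})Q \mathrm{d}\lambda|
  &\le \textstyle\int_\Lambda |\Psi_{\mathcal{Z},\mathcal{A}}(z, e^{q})|\,|Q| \mathrm{d}\lambda \\
  &\le \Psi_{\mathcal{Z},\mathcal{A}}^{u_1,u_2}(z)\textstyle\int_\Lambda |Q| \mathrm{d}\lambda \ .
\end{align*}
Combined with \cref{lem:est-by-eval}, this gives the extrapolation error bound.

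The only delicate point is that the maximum is actually attained, so that \(\Psi^{u_1,u_2}_{\mathcal{Z},\mathcal{A}}(z)\) is a finite number. For fixed \(z\), the map \(u \mapsto \sum_i \alpha_i(z)u^{z_i} - u^z\) is continuous on the compact interval \([u_1,u_2] \subset [0,\infty)\). This uses the convention \(u^0 = 1\), including at \(u = 0\), which makes \(u^{z_i}\) continuous when some \(z_i = 0\). So the maximum exists and is finite.

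Integrability of \(|Q|\) is implicitly assumed, as it is already needed for \(P\) to be defined. If \(\int_\Lambda|Q|\,\mathrm{d}\lambda = \infty\), the inequality holds trivially. I expect no real obstacle beyond these measure-theoretic conventions.
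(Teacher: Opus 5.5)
Your proof is correct and follows the same route as the paper: the pointwise bound \(|\Psi_{\mathcal{Z},\mathcal{A}}(z, e^{q(\lambda)})| \le \Psi^{u_1,u_2}_{\mathcal{Z},\mathcal{A}}(z)\), which holds because \(e^{q(\lambda)} \in [u_1,u_2]\), followed by moving the absolute value inside the integral and pulling out the constant. Your remarks on the maximum being attained and on the integrability of \(Q\) are harmless details that the paper leaves implicit.
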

\begin{proof}
  For every \(\lambda \in \Lambda\), we have
  \begin{align*}
    |\Psi_{\mathcal{Z},\mathcal{A}}(z, e^{q(\lambda)})|
    \le \max_{u_1 \le u \le u_2} |\Psi_{\mathcal{Z},\mathcal{A}}(z, u)| 
    = \Psi_{\mathcal{Z},\mathcal{A}}^{u_1,u_2}(z)
      \ .
  \end{align*}
  Therefore,
  \begin{align*}
    |\textstyle\int_\Lambda \Psi_{\mathcal{Z},\mathcal{A}}(z, e^{q})Q \mathrm{d}\lambda|
    &\le \textstyle\int_\Lambda |\Psi_{\mathcal{Z},\mathcal{A}}(z, e^{q}) Q| \mathrm{d}\lambda \\
    &= \textstyle\int_\Lambda |\Psi_{\mathcal{Z},\mathcal{A}}(z, e^{q})| |Q| \mathrm{d}\lambda \\
    &\le \textstyle\int_\Lambda \Psi_{\mathcal{Z},\mathcal{A}}^{u_1,u_2}(z) |Q| \mathrm{d}\lambda \\
    &= \Psi_{\mathcal{Z},\mathcal{A}}^{u_1,u_2}(z)\textstyle\int_\Lambda |Q| \mathrm{d}\lambda \ .
  \end{align*}
\end{proof}
Combining \cref{lem:est-by-eval} and \cref{lem:bound-by-eval} into a single statement
yields our first main result:
\begin{corollary}
  \label{cor:main1-eval}
  Let \(P = P[q, Q]\),
  \(\mathcal{Z} = \{z_i\}_{i=1}^n \subset [0, \infty)\),
  and let \(\mathcal{A} = \{\alpha_i\}_{i=1}^n\) with \(\alpha_i: [0, \infty) \to \mathbb{R}\).
  Furthermore, let \(u_1, u_2 \in \mathbb{R}\) such that \(u_1 \le e^{q(\lambda)} \le u_2\) for all \(\lambda \in \Lambda\).
  Then
  \begin{align*}
    \textstyle\sum_i\alpha_i(z)P(z_i)
    &= P(z) \pm \Psi_{\mathcal{Z},\mathcal{A}}^{u_1,u_2}(z)\textstyle\int_\Lambda |Q| \mathrm{d}\lambda
      \ ,
  \end{align*}
  where the notation \(y = x \pm \delta\) means \(y \in \{x + w\mid |w| \le |\delta|\}\).
\end{corollary}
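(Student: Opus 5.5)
The corollary follows by chaining \cref{lem:est-by-eval} and \cref{lem:bound-by-eval}. No further analysis is needed; the only point requiring care is matching the error term to the \(\pm\) notation.

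First, apply \cref{lem:est-by-eval} to \(P = P[q, Q]\), \(\mathcal{Z}\) and \(\mathcal{A}\). This gives the exact identity
\begin{align*}
  \textstyle\sum_{i=1}^n\alpha_i(z)P(z_i) = P(z) + w \ ,
  \quad
  w = \textstyle\int_\Lambda \Psi_{\mathcal{Z},\mathcal{A}}(z, e^{q})Q \mathrm{d}\lambda \ .
\end{align*}
Second, since \(u_1 \le e^{q(\lambda)} \le u_2\) for all \(\lambda \in \Lambda\), \cref{lem:bound-by-eval} yields
\begin{align*}
  |w| \le \Psi_{\mathcal{Z},\mathcal{A}}^{u_1,u_2}(z)\textstyle\int_\Lambda |Q| \mathrm{d}\lambda \ .
\end{align*}

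Third, I would note that the right-hand side is a product of two non-negative quantities. The factor \(\Psi_{\mathcal{Z},\mathcal{A}}^{u_1,u_2}(z)\) is a maximum of absolute values, and \(\int_\Lambda |Q|\mathrm{d}\lambda\) is non-negative. Hence the right-hand side equals its own absolute value \(|\delta|\), with \(\delta\) the claimed error term. Therefore \(\sum_i \alpha_i(z)P(z_i) \in \{P(z) + w : |w| \le |\delta|\}\), which is exactly the assertion.

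The main subtlety is well-definedness rather than any estimate. The maximum in \(\Psi^{u_1,u_2}_{\mathcal{Z},\mathcal{A}}\) is attained, since \(u \mapsto \Psi_{\mathcal{Z},\mathcal{A}}(z,u)\) is continuous on the compact interval \([u_1,u_2]\) when \(u_1 > 0\). If \(u_1 = 0\), the term \(u^{z_i}\) with \(z_i = 0\) should be read as \(1\) to keep continuity. We also need \(\int_\Lambda |Q|\) to be finite, which I would take as implicit in \(P[q,Q]\) being defined, or otherwise treat as a trivially true statement with infinite error. Apart from these checks, the proof is just a concatenation of the two lemmas.
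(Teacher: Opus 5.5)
Your proposal is correct and matches the paper's argument: the paper obtains the corollary by combining the exact identity of \cref{lem:est-by-eval} with the bound of \cref{lem:bound-by-eval}. Your added remarks on the non-negativity of the error factor and on the maximum being well defined are details the paper leaves implicit.
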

For an illustration of \cref{cor:main1-eval},
see \cref{fig:extrapolation} and \cref{fig:alphas}.
The practical utility comes from the fact that the error bound for extrapolation
contains the factor
\begin{align*}
  \Psi_{\mathcal{Z},\mathcal{A}}^{u_1,u_2}(z)
  &= \max_{u_1 \le u \le u_2} |\textstyle\sum_{i=1}^n\alpha_i(z)u^{z_i} - u^z|
    \ .
\end{align*}
The extent to which this quantity can be minimized
with respect to the coefficients \(\alpha_i(z)\)
depends on how well the continuous function \(u^z\)
can be approximated by a linear combination of the functions \(u^{z_i}\)
on the interval \([u_1, u_2]\).
This is a standard question in approximation theory
and admits an elegant answer in the limit \(n \to \infty\).
In the following, let \(C[0, 1]\) denote
the space of continuous functions \([0, 1] \to \mathbb{R}\).
\begin{theorem*}[Müntz–Szász]
  Let \(\{z_i\}_{i=0}^\infty \subset [0, \infty)\) with \(z_0 = 0\)
  be an increasing sequence,
  and let \(u\) be a variable ranging over \([0, 1]\).
  Then the \(\mathbb{R}\)-span of \(\{u^{z_i}\}_{i=0}^\infty\) is dense in \(C[0, 1]\) under the supremum norm
  if and only if \(\sum_{i=1}^\infty z_i^{-1} = \infty\).
\end{theorem*}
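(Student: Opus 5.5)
The statement is the classical Müntz–Szász theorem, so I would reproduce the standard Hilbert-space argument. It first works in \(L^2[0,1]\), where distances to spans of monomials can be computed exactly, and then transfers the result to the supremum norm.

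\textbf{Step 1: Gram determinant formula.} For distinct real exponents \(a_1,\dots,a_n > -1/2\), the functions \(u^{a_i}\) lie in \(L^2[0,1]\) with \(\langle u^a,u^b\rangle = 1/(a+b+1)\). Their Gram matrix is therefore a Cauchy matrix. Using the Cauchy determinant formula and the identity \(\mathrm{dist}(f,V)^2 = G(f,v_1,\dots,v_n)/G(v_1,\dots,v_n)\), I would derive that for \(m>-1/2\) not among the \(a_i\),
\begin{align*}
  \mathrm{dist}_{L^2}\bigl(u^m,\operatorname{span}\{u^{a_i}\}_{i=1}^n\bigr)
  = \frac{1}{\sqrt{2m+1}}\prod_{i=1}^n\frac{|m-a_i|}{m+a_i+1} \ .
\end{align*}
This determinant computation is the main technical step. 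It is routine but must be carried out carefully, pairing the two Cauchy products so that only the factors involving \(m\) survive.

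\textbf{Step 2: Behaviour of the product.} I would show that \(\prod_i |m-a_i|/(m+a_i+1)\to 0\) if and only if \(\sum_i a_i^{-1}=\infty\). There are two cases.
\begin{enumerate}[label=(\alph*)]
  \item If the \(a_i\) are bounded, then \(\sum_i a_i^{-1}=\infty\) automatically. Each factor is at most a constant \(<1\) for fixed \(m\), so the product tends to \(0\).
  \item If \(a_i\to\infty\), then \(\log\frac{|a_i-m|}{a_i+m+1} = -\frac{2m+1}{a_i}+O(a_i^{-2})\). The product vanishes exactly when \(\sum_i a_i^{-1}=\infty\).
\end{enumerate}
Dropping or shifting finitely many exponents does not affect divergence of the series.

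\textbf{Step 3: Sufficiency in the sup norm.} Assume \(\sum_i z_i^{-1}=\infty\). Since \(z_0=0\), constants lie in the span. For an integer \(m\ge1\), write \(u^m=m\int_0^u t^{m-1}\mathrm{d}t\). By Steps 1 and 2, applied to the exponents \(z_i-1\) with \(z_i>1/2\) (only finitely many are discarded, and the series still diverges), \(t^{m-1}\) is an \(L^2\)-limit of linear combinations \(\sum_i\beta_i t^{z_i-1}\). Integrating from \(0\) to \(u\) and applying Cauchy–Schwarz gives
\begin{align*}
  \bigl|u^m-\textstyle\sum_i(m\beta_i/z_i)u^{z_i}\bigr|\le m\,\lVert t^{m-1}-\textstyle\sum_i\beta_it^{z_i-1}\rVert_{L^2}
\end{align*}
uniformly in \(u\in[0,1]\). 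Hence every monomial lies in the uniform closure of the span, and Weierstrass' theorem gives density in \(C[0,1]\).

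\textbf{Step 4: Necessity.} Suppose \(\sum_i z_i^{-1}<\infty\), and choose an integer \(m\) not among the \(z_i\). By Step 2 the infinite product stays bounded below by some \(\delta>0\). Including the constant function only multiplies by the single nonzero factor \(m/(m+1)\). Thus the \(L^2\) distance from \(u^m\) to every finite sub-span is at least a fixed positive constant. Since \(\lVert f\rVert_{L^2}\le\lVert f\rVert_\infty\) on \([0,1]\), the uniform distance is bounded below by the same constant. So \(u^m\) is not in the uniform closure, and the span is not dense.
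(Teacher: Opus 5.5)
The paper gives no proof of this statement. It quotes the classical Müntz–Szász theorem and refers to \cite{almira2007muntz}, so your proposal is a self-contained argument for something the paper takes as given. Your route is the standard \(L^2\) Gram-determinant proof, reading ``increasing'' as strictly increasing, which Step 1 needs and without which the statement fails. Steps 1, 2 and 4 are sound:
\begin{itemize}
  \item The Cauchy-determinant distance formula is correct.
  \item The product dichotomy is correct for increasing exponents.
  \item The necessity argument is complete. An integer \(m \notin \{z_i\}\) exists because \(\sum_i z_i^{-1} < \infty\) while the harmonic series diverges, and \(\lVert f \rVert_{L^2} \le \lVert f \rVert_\infty\) transfers the \(L^2\) lower bound to the uniform norm.
\end{itemize}

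\textbf{The gap is in Step 3.} You pass to the exponents \(z_i - 1\) with \(z_i > 1/2\) and assert that only finitely many exponents are discarded. This fails whenever \(\sup_i z_i \le 1/2\). For example, take \(z_i = 1/2 - 1/(i+2)\) for \(i \ge 1\):
\begin{itemize}
  \item The sequence is strictly increasing and \(\sum_i z_i^{-1} = \infty\), so the theorem asserts density, which is true.
  \item Yet no \(z_i\) exceeds \(1/2\), so every \(t^{z_i - 1}\) lies outside \(L^2[0,1]\).
  \item Your integration trick therefore has nothing to act on.
\end{itemize}
Step 2(a) does not rescue this, since it only concerns the shifted exponents that survive the cut.

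\textbf{The fix is short.} For \(r > 0\), the map \(f \mapsto f(v^r)\) is a linear isometry of \(C[0,1]\) onto itself. It sends \(u^{z_i}\) to \(v^{r z_i}\), keeps \(z_0 = 0\), and multiplies \(\sum_i z_i^{-1}\) by \(1/r\). So density and divergence are both invariant under rescaling the exponents. Choosing \(r\) with \(r z_1 > 1/2\) puts every exponent with \(i \ge 1\) above \(1/2\). The shifted exponents \(r z_i - 1\) then form an increasing sequence bounded away from \(-1/2\), and your Steps 2 and 3 apply verbatim. With this reduction placed at the start of Step 3, your proof is complete.
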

For reference, see e.g. \cite{almira2007muntz}.
Adapting this result to our setting requires two things:
a change of variables to go from \([0, 1]\) to \([u_1, u_2]\)
and removing the requirement of having \(z_0 = 0\).
The reason for caring about the latter
is that it makes little sense to assume knowledge of \(P(0)\) in color restoration,
where \(P(0)\) is the quantity we will be most interested in estimating in the first place.
\begin{theorem}
  \label{thm:limit-eval}
  Let \(P = P[q, Q]\) and \(u_1, u_2 \in \mathbb{R}\) such that \(0 < u_1 \le e^{q(\lambda)} \le u_2\) for all \(\lambda \in \Lambda\).
  Let \(\{z_i\}_{i=0}^\infty \subset [0, \infty)\) be an increasing sequence satisfying \(\sum_{i = 1}^\infty z_i^{-1} = \infty\).
  Finally, let \(\mathcal{Z}(n) = \{z_i\}_{i=0}^n\), and for any \(z \in [0, \infty)\) let
  \begin{align*}
    \mathcal{A}(n) = \{\alpha_{n, i}\}_{i=0}^n = \arg \min_{\mathcal{A}(n)} \Psi_{\mathcal{Z}(n), \mathcal{A}(n)}^{u_1,u_2}(z)
    \ .
  \end{align*}
  Then
  \begin{align*}
    \lim_{n \to \infty} \textstyle\sum_{i=0}^n\alpha_{n, i}(z)P(z_i) = P(z)
    \ .
  \end{align*}
\end{theorem}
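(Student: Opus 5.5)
The plan is to reduce the statement to showing that the minimal error factor vanishes, and then to get that from the Müntz–Szász theorem. Since \(\mathcal{A}(n)\) is a minimizer, \cref{cor:main1-eval} gives
\begin{align*}
  \bigl|\textstyle\sum_{i=0}^n\alpha_{n,i}(z)P(z_i) - P(z)\bigr|
  \le m_n \textstyle\int_\Lambda |Q|\mathrm{d}\lambda \ ,
\end{align*}
where \(m_n = \min_{\mathcal{A}(n)}\Psi^{u_1,u_2}_{\mathcal{Z}(n),\mathcal{A}(n)}(z)\). Here \(\int_\Lambda|Q|\mathrm{d}\lambda\) is a fixed constant, which we implicitly assume finite. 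It therefore suffices to show \(m_n \to 0\).

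First I record that \(m_n\) is non-increasing in \(n\). Any coefficient choice for \(\mathcal{Z}(n)\) extends to \(\mathcal{Z}(n+1)\) by padding with a zero coefficient. Hence it is enough to show that for every \(\eta>0\) \emph{some} finite linear combination \(\sum_{i\le N}\alpha_i u^{z_i}\) approximates \(u^z\) within \(\eta\) uniformly on \([u_1,u_2]\).

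If the minimum in the definition of \(\mathcal{A}(n)\) is not attained, I would take near-minimizers instead; the argument is unchanged. Attainment should in fact hold, since the objective is a continuous convex function on a finite-dimensional space. I also assume the sequence is strictly increasing, discarding repeated exponents otherwise, and that \(u_1<u_2\); the case \(u_1=u_2\) is trivial.

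Next I change variables and shift exponents. Put \(t = u/u_2 \in [a,1]\) with \(a = u_1/u_2 > 0\). Then \(u^{z_i} = u_2^{z_i}t^{z_i}\) and \(u^z = u_2^z t^z\). So it suffices to approximate \(t^z\) on \([a,1]\) by the span of \(\{t^{z_i}\}\), after which one sets \(\alpha_i = \beta_i u_2^{z-z_i}\) and multiplies the error by \(u_2^z\).

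To remove the requirement \(z_0=0\), define \(w_i = z_i - z_0\). This gives an increasing sequence with \(w_0=0\), and \(\sum_{i\ge1} w_i^{-1} \ge \sum_{i\ge1} z_i^{-1} = \infty\) because \(0<w_i\le z_i\). The Müntz–Szász theorem therefore makes the span of \(\{t^{w_i}\}\) dense in \(C[0,1]\).

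The main obstacle is that the target \(h(t) = t^{z-z_0}\) may have a negative exponent when \(z<z_0\), which is exactly the interesting case \(z=0\). Then \(h\) is not in \(C[0,1]\). I would handle this by working only on \([a,1]\). Extend \(h\) continuously to \([0,1]\), for instance by the constant \(a^{z-z_0}\) on \([0,a]\). Then pick \(\beta_i\) with \(\sup_{[0,1]}|\sum\beta_i t^{w_i} - \tilde h(t)| \le \eta\), where \(\tilde h\) is the extension. Multiplying by \(t^{z_0}\), which lies in \((0,1]\) on \([a,1]\), gives \(\sup_{[a,1]}|\sum\beta_i t^{z_i} - t^z| \le \eta\).

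Undoing the substitution \(t=u/u_2\) gives \(\Psi\)-error at most \(u_2^z\eta\) on \([u_1,u_2]\). Since \(\eta\) is arbitrary and \(m_n\) is monotone, \(m_n\to 0\), and the limit follows.
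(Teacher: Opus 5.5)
Your proof is correct. Its skeleton is the paper's: reduce to the minimal error factor tending to zero via the corollary, rescale by \(\mu = u/u_2\), extend the target continuously to \([0,1]\), apply M\"untz--Sz\'asz, and pull back. You differ on the one nontrivial step, removing the requirement \(z_0 = 0\).

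The paper never uses the sample at \(z_0\). It applies M\"untz--Sz\'asz to the exponents \(0, z_1, z_2, \dots\) and extends \((u_2\mu)^z\) by a linear piece on \([0, u_1/u_2)\), chosen so that \(f_z(0) = 0\). Evaluating at \(\mu = 0\), it concludes that the constant coefficient \(\gamma_{n,0}\) tends to \(0\), drops it, and sets \(\tilde\alpha_{n,0} = 0\).

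You instead shift to \(w_i = z_i - z_0\), so the constant function is the genuine basis element \(t^{w_0}\). Any continuous extension of \(t^{z-z_0}\) then works, with no vanishing at \(0\) required. Multiplying back by \(t^{z_0} \le 1\) on \([a,1]\) transfers the error bound directly, with no limiting argument on a constant coefficient.

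What each route buys:
\begin{itemize}
\item Your version is slightly more direct. It also makes explicit two points the paper leaves implicit: density gives approximants indexed by \(n\) because \(m_n\) is non-increasing (the spans are nested), and near-minimizers suffice if the minimum is not attained.
\item The paper's version shows the sample at \(z_0\) is dispensable. Any strictly increasing sequence of positive exponents with divergent reciprocal sum suffices, with no anchor exponent.
\end{itemize}

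Two side remarks need tightening.
\begin{itemize}
\item Continuity plus convexity alone does not guarantee a minimizer. The correct reason is that a best approximation from a finite-dimensional subspace of \(C[u_1,u_2]\) always exists.
\item Discarding repeated exponents is not a valid reduction, since it can destroy the divergence hypothesis. For example, \(z_i = 1\) for all \(i \ge 1\) has divergent reciprocal sum but spans only a two-dimensional space. The statement is false for merely non-decreasing sequences, so ``increasing'' must be read as strictly increasing, as the paper's appeal to M\"untz--Sz\'asz already presumes.
\end{itemize}
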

\begin{proof}
  Due to \cref{cor:main1-eval},
  it suffices to show that
  \begin{align*}
    \lim_{n \to \infty} \Psi_{\mathcal{Z}(n),\mathcal{A}(n)}^{u_1,u_2}(z) = 0 \ .
  \end{align*}
  Consider the variable change \(\mu = u/u_2\), and let
  \begin{align*}
    f_z(\mu) =
    \begin{cases}
      (u_2\mu)^z &\text{ if } \mu \ge u_1/u_2 \\
      u_1^{z-1}u_2 \mu &\text{ if } \mu < u_1/u_2 \\
    \end{cases}
    \ .
  \end{align*}
  Observe that \(f_z \in C[0, 1]\).
  Since \(0, z_1, z_2, \dots\) is an increasing sequence
  satisfying \(\sum_{i=1}^\infty z_i^{-1} = \infty\), then
  \begin{align*}
    \lim_{n\to\infty} \max_{0 \le \mu \le 1} |\gamma_{n,0}(z) + \textstyle\sum_{i=1}^n \gamma_{n, i}(z)\mu^{z_i} - f_z(\mu)| = 0
  \end{align*}
  for some coefficients \(\gamma_{n, i}(z) \in \mathbb{R}\)
  due to Müntz–Szász.
  However, since \(f_z(0) = 0\) and \(z_i > 0\) for \(i \ge 1\), then
  \begin{align*}
    \lim_{n \to \infty} \gamma_{n, 0}(z) = 0
    \ .
  \end{align*}
  Consequently, we may omit the term for \(i = 0\) and still have
  \begin{align*}
    \lim_{n \to \infty} \max_{0 \le \mu \le 1} |\textstyle\sum_{i=1}^n \gamma_{n, i}(z)\mu^{z_i} - f_z(\mu)| = 0
    \ .
  \end{align*}
  Let \(\tilde{\mathcal{A}}(n) = \{\tilde{\alpha}_{n,i}\}_{i=0}^n\) with \(\tilde{\alpha}_{n,0}(z) = 0\) and
  \begin{align*}
    \tilde{\alpha}_{n,i}(z) = \gamma_{n,i}(z)u_2^{-z_i} \text{ for } i \ge 1 \ .
  \end{align*}
  Since \(f_z(\mu) = u_2^z\mu^z\) for \(\mu \ge u_1/u_2\), then
  \begin{align*}
    \Psi_{\mathcal{Z}(n),\tilde{\mathcal{A}}(n)}^{u_1,u_2}(z)
    &= \max_{u_1/u_2 \le \mu \le 1} |\textstyle\sum_{i=1}^n \gamma_{n,i}(z)\mu^{z_i} - u_2^z\mu^z| \\
    &= \max_{u_1/u_2 \le \mu \le 1} |\textstyle\sum_{i=1}^n \gamma_{n,i}(z)\mu^{z_i} - f_z(\mu)| \\
    &\le \max_{0 \le \mu \le 1} |\textstyle\sum_{i=1}^n \gamma_{n,i}(z)\mu^{z_i} - f_z(\mu)|
      \mathrel{\mathop{\to}\limits_{n\to\infty}} 0 \ .
  \end{align*}
  Since the \(\alpha_{n,i}\) minimize \(\Psi_{\mathcal{Z}(n),\mathcal{A}(n)}^{u_1,u_2}(z)\), then
  \begin{align*}
    0 \le \Psi_{\mathcal{Z}(n),\mathcal{A}(n)}^{u_1,u_2}(z)
    \le \Psi_{\mathcal{Z}(n),\tilde{\mathcal{A}}(n)}^{u_1,u_2}(z)
    \ ,
  \end{align*}
  and it follows from the squeeze theorem that
  \begin{align*}
    \lim_{n\to\infty}\Psi_{\mathcal{Z}(n),\mathcal{A}(n)}^{u_1,u_2}(z) = 0 \ .
  \end{align*}
\end{proof}

\begin{figure}[htb]
  \centering
  \includegraphics[width=1.0\linewidth]{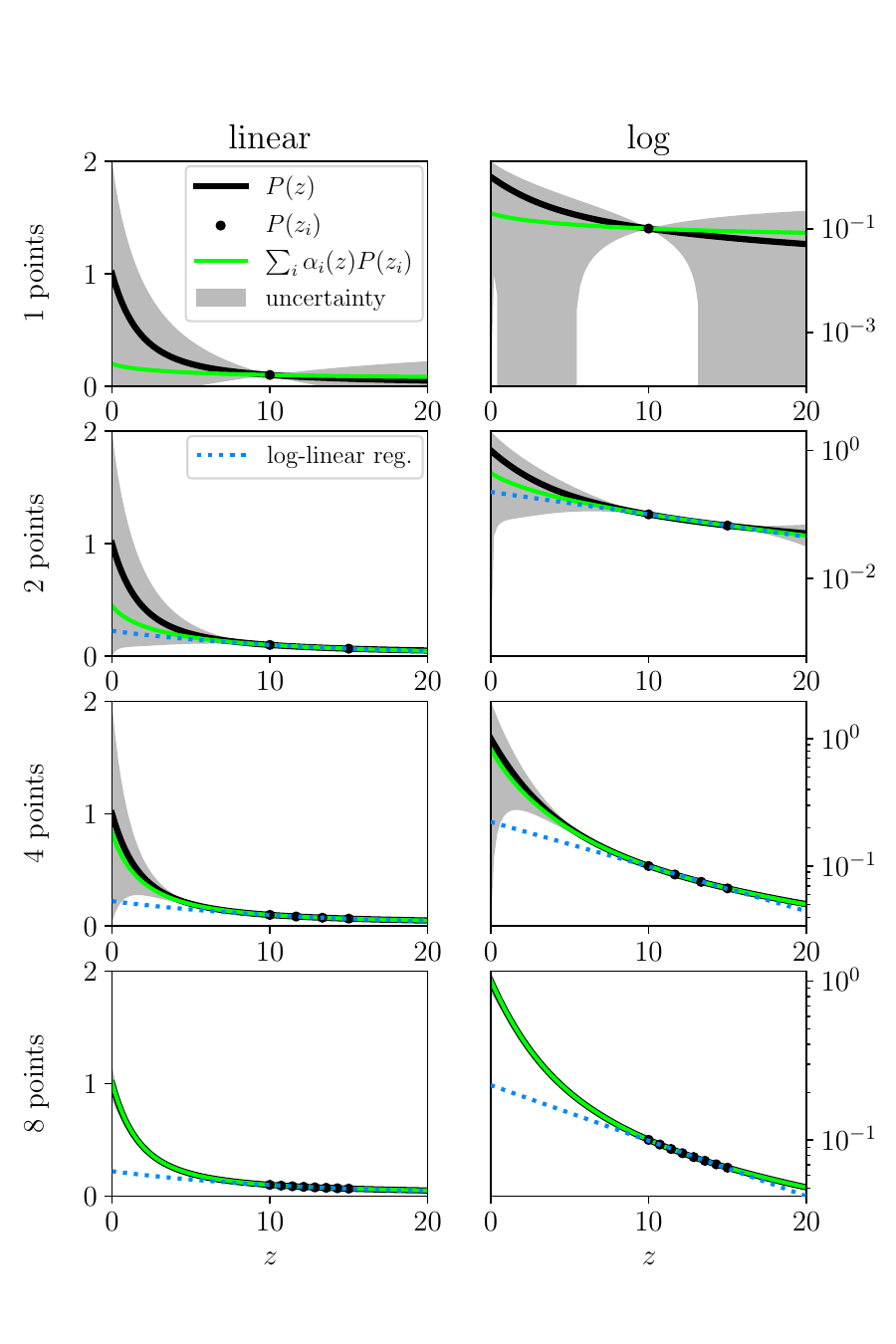}
  \caption{
    Extrapolation of \(P = P[q, Q]\), where \(q(\lambda) = -\lambda\), \(Q(\lambda) = 1\)
    and \(\Lambda = [0, 1]\).
    The black curve describes \(P(z)\), which is treated as unknown
    except for the points \(z_i\) (black dots) that are linearly spaced between \(z=10\) and \(z=15\).
    The green curve describes the extrapolation \(\sum_i\alpha_i(z)P(z_i) \approx P(z)\),
    where the \(\alpha_i\) were optimized numerically
    using \(u_1= 1/e\) and \(u_2 = 1\).
    The gray region describes the uncertainty bounds
    as given by \cref{lem:bound-by-eval}.
    Dotted blue denotes log-linear regression for comparison.
  }
  \label{fig:extrapolation}
\end{figure}

\begin{figure}[htb]
  \centering
  \includegraphics[width=1.0\linewidth]{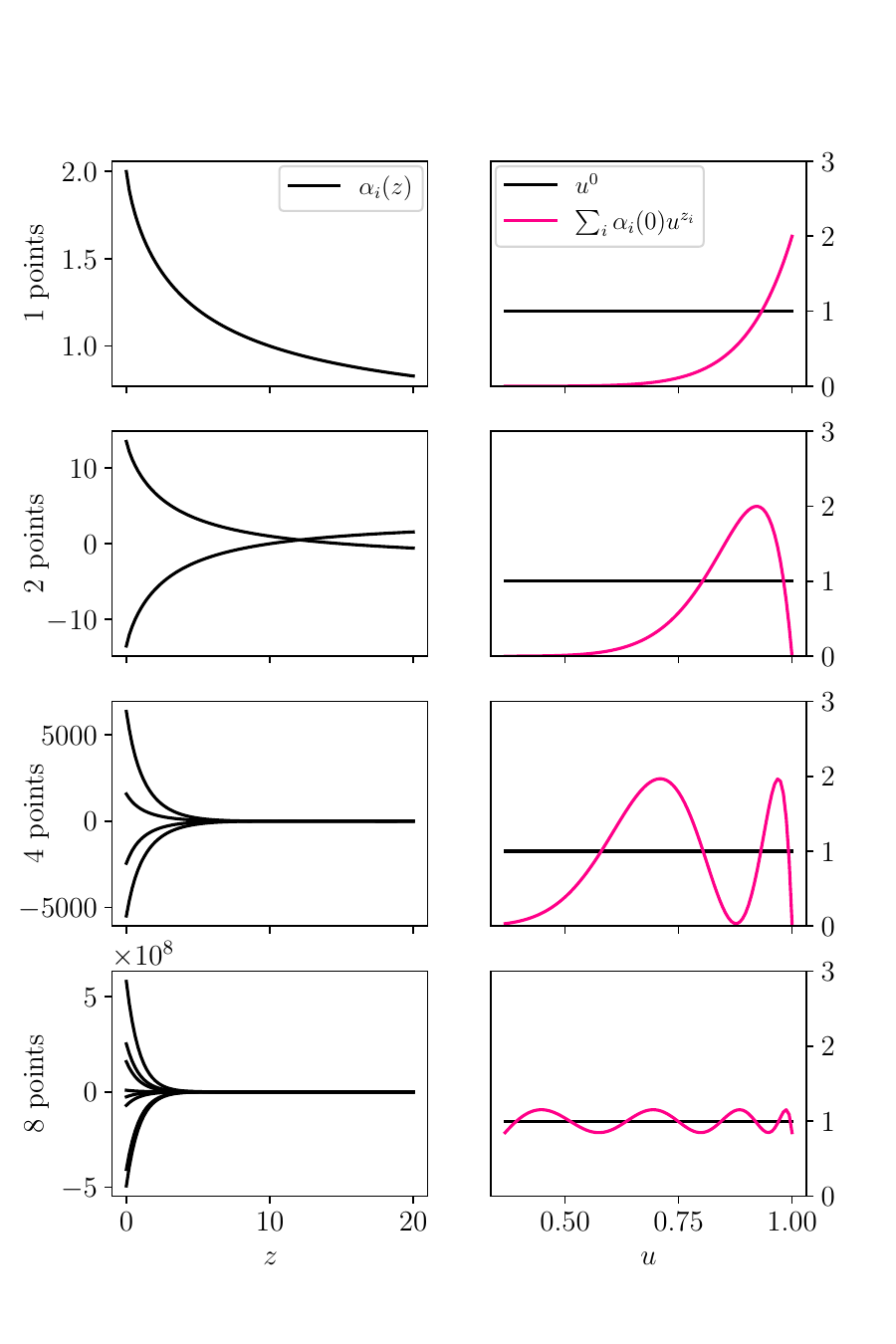}
  \caption{
    With the context of \cref{fig:extrapolation},
    the black curves in the left column describe the coefficients \(\alpha_i(z)\).
    In the right column,
    magenta is \(\sum_i\alpha_i(0)u^{z_i}\) for \(u_1 \le u \le u_2\)
    and black is \(u^0 = 1\),
    while their difference is \(\Psi_{\mathcal{Z}, \mathcal{A}}(0, u)\).
  }
  \label{fig:alphas}
\end{figure}
Intuitively, \cref{thm:limit-eval} claims that if \(P = P[q, Q]\), then
\(P(z)\) can be estimated to arbitrary precision
from the \(z_i\) and the \(P(z_i)\),
provided \(q\) has known upper and lower bounds,
\(n\) is large enough, and \(z_n\) does not grow too fast as \(n \to \infty\).
\subsection{Extrapolation from derivatives}
\label{subsec:derivatives}
Extrapolation of attenuating functions from derivatives
is completely analogous to extrapolation from evaluations.
The main difference is that the evaluations \(P(z_i)\) at multiple points \(z_i\)
are replaced with multiple derivatives \(\frac{\mathrm{d}^iP}{\mathrm{d}z^i}(z_0)\) at a single point \(z_0\).
Since images are, in practice, represented by evaluations,
this perspective appears to be of lesser relevance to color restoration.
Still, we include this if only to highlight the mathematical structure of attenuating functions.
That being said, derivatives can be estimated from evaluations,
and it is conceivable that
combining the two kinds of extrapolation\textemdash using evaluations and derivatives\textemdash
might yield tighter error bounds.
We leave such considerations for future research,
focusing our current scope on asymptotic behavior instead.

Let \(\mathbb{N} = \{0, 1, 2, \dots\}\) be the set of non-negative integers.
The following definition corresponds to \cref{def:delta-sample}:
\begin{definition}
  \label{def:delta-diff}
  For any \(z_0 \in \mathbb{R}\),
  any finite \(\mathcal{D} \subset \mathbb{N}\)
  and any \(\mathcal{B} = \{\beta_i\}_{i \in \mathcal{D}}\) with \(\beta_i:[0, \infty) \to \mathbb{R}\), define
  \begin{align*}
    \Phi_{\mathcal{D}, \mathcal{B}}(z_0, z, v)
    = \textstyle\sum_{i \in \mathcal{D}}\beta_i(z)v^i - e^{v(z - z_0)}
    \ ,
  \end{align*}
  where \(v\) is a real-valued variable.
  Furthermore, for any bounds \(v_1,v_2 \in \mathbb{R}\) with \(v_1 < v_2\), define
  \begin{align*}
    \Phi_{\mathcal{D}, \mathcal{B}}^{v_1,v_2}(z_0, z)
    = \max_{v_1 \le v \le v_2} |\Phi_{\mathcal{D}, \mathcal{B}}(z_0, z, v)|
    \ .
  \end{align*}
\end{definition}
As for evaluations,
any choice of the coefficients \(\beta_i(z)\)
yields the approximation \(P(z) \approx \sum_{i=1}^n\beta_i(z)\frac{\mathrm{d}^iP}{\mathrm{d}z^i}(z_0)\)
with a bounded error,
though making these bounds tight
requires minimizing \(\Phi_{\mathcal{D}, \mathcal{B}}^{v_1,v_2}(z_0, z)\)
with respect to \(\mathcal{B} = \{\beta_i\}_{i \in \mathcal{D}}\) at \(z\).
The derivative version of \cref{lem:est-by-eval} is as follows:
\begin{lemma}
  \label{lem:est-by-diff}
  Let \(P = P[q, Q]\),
  \(z_0 \in \mathbb{R}\),
  \(\mathcal{D} \subset \mathbb{N}\) finite
  and \(\mathcal{B} = \{\beta_i\}_{i \in \mathcal{D}}\) with \(\beta_i:[0, \infty) \to \mathbb{R}\).
  Then
  \begin{align*}
    \textstyle\sum_{i \in \mathcal{D}}\beta_i(z)\frac{\mathrm{d}^iP}{\mathrm{d}z^i}(z_0)
    &= P(z) + \textstyle\int_\Lambda \Phi_{\mathcal{D},\mathcal{B}}(z_0,z,q)e^{qz_0}Q \mathrm{d}\lambda
      \ .
  \end{align*}
\end{lemma}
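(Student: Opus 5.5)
The argument mirrors the proof of \cref{lem:est-by-eval}. The key step is to differentiate \(P = P[q, Q]\) under the integral sign, which gives
\begin{align*}
  \frac{\mathrm{d}^iP}{\mathrm{d}z^i}(z_0) = \textstyle\int_\Lambda q^i e^{qz_0}Q \mathrm{d}\lambda
\end{align*}
for every \(i \in \mathcal{D}\). Differentiating under the integral is justified because the integrand is smooth in \(z\) for each \(\lambda\) and \(\Lambda\) is bounded. Should boundedness of \(q\) and integrability of \(q^ie^{qz}Q\) near \(z_0\) not be automatic, I would adopt them as standing assumptions, as is implicit in the paper.

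Next, take the finite linear combination with coefficients \(\beta_i(z)\) and move the finite sum inside the integral:
\begin{align*}
  \textstyle\sum_{i\in\mathcal{D}}\beta_i(z)\frac{\mathrm{d}^iP}{\mathrm{d}z^i}(z_0) = \textstyle\int_\Lambda \big(\textstyle\sum_{i\in\mathcal{D}}\beta_i(z)q^i\big)e^{qz_0}Q\mathrm{d}\lambda .
\end{align*}
By \cref{def:delta-diff}, evaluated pointwise at \(v = q(\lambda)\), we have \(\sum_i\beta_i(z)q^i = e^{q(z-z_0)} + \Phi_{\mathcal{D},\mathcal{B}}(z_0,z,q)\). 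Substituting this in and splitting the integral gives
\begin{align*}
  \textstyle\int_\Lambda e^{q(z-z_0)}e^{qz_0}Q\mathrm{d}\lambda + \textstyle\int_\Lambda\Phi_{\mathcal{D},\mathcal{B}}(z_0,z,q)e^{qz_0}Q\mathrm{d}\lambda .
\end{align*}
The first term simplifies to \(\int_\Lambda e^{qz}Q\mathrm{d}\lambda = P(z)\), which is exactly the claimed identity.

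The only nontrivial point is the interchange of differentiation and integration in the first step; everything after it is linear algebra inside the integral.
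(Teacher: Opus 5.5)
Your proposal is correct and matches the paper's proof essentially step for step: differentiate under the integral to get $\int_\Lambda q^i e^{qz_0}Q\,\mathrm{d}\lambda$, pull the finite sum inside, substitute $\sum_{i}\beta_i(z)q^i = e^{q(z-z_0)} + \Phi_{\mathcal{D},\mathcal{B}}(z_0,z,q)$, and recognize the first term as $P(z)$. The paper performs the interchange of differentiation and integration without comment, so stating boundedness of $q$ and integrability of $Q$ as standing assumptions is appropriate.
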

\begin{proof}
  Since
  \begin{align*}
    \textstyle\frac{\mathrm{d}^iP}{\mathrm{d}z^i}(z)
    &= \textstyle\frac{\mathrm{d}^i}{\mathrm{d}z^i} \textstyle\int_\Lambda e^{q(\lambda)z}Q \mathrm{d}\lambda \\
    &= \textstyle\int_\Lambda \frac{\mathrm{d}^i}{\mathrm{d}z^i}e^{qz}Q \mathrm{d}\lambda \\
    &= \textstyle\int_\Lambda q^ie^{qz}Q \mathrm{d}\lambda
      \ ,
  \end{align*}
  then
  \begin{align*}
    &\textstyle\sum_{i \in \mathcal{D}}\beta_i(z)\frac{\mathrm{d}^iP}{\mathrm{d}z^i}(z_0) \\
    &= \textstyle\sum_{i \in \mathcal{D}}\beta_i(z) \textstyle\int_\Lambda q^ie^{qz_0}Q \mathrm{d}\lambda \\
    &= \textstyle\int_\Lambda (\textstyle\sum_{i \in \mathcal{D}}\beta_i(z)q^i)e^{qz_0}Q \mathrm{d}\lambda \\
    &= \textstyle\int_\Lambda (e^{q(z - z_0)}+ \Phi_{\mathcal{D},\mathcal{B}}(z_0,z,q))e^{qz_0}Q \mathrm{d}\lambda \\
    &= \textstyle\int_\Lambda e^{q(z - z_0)}e^{qz_0}Q \mathrm{d}\lambda
      + \textstyle\int_\Lambda \Phi_{\mathcal{D},\mathcal{B}}(z_0,z,q)e^{qz_0}Q \mathrm{d}\lambda
      \ ,
  \end{align*}
  where
  \begin{align*}
    \textstyle\int_\Lambda e^{q(z - z_0)}e^{qz_0}Q \mathrm{d}\lambda
    = \textstyle\int_\Lambda e^{qz}Q \mathrm{d}\lambda
    = P(z) \ .
  \end{align*}
\end{proof}
To bound the error term isolated in \cref{lem:est-by-diff},
we need to know upper and lower bounds on \(q(\lambda)\) for all \(\lambda \in \Lambda\).
The following lemma functions as \cref{lem:bound-by-eval} for derivatives:
\begin{lemma}
  \label{lem:bound-by-diff}
  Let \(P = P[q, Q]\) and \(v_1,v_2 \in \mathbb{R}\)
  such that \(v_1 \le q(\lambda) \le v_2\) for all \(\lambda \in \Lambda\).
  Then
   \begin{align*}
     |\textstyle\int_\Lambda \Phi_{\mathcal{D},\mathcal{B}}(z_0,z,q)e^{qz_0}Q \mathrm{d}\lambda|
     \le \Phi_{\mathcal{D},\mathcal{B}}^{v_1,v_2}(z_0, z)\textstyle\int_\Lambda e^{qz_0}|Q| \mathrm{d}\lambda
     \ .
  \end{align*}
\end{lemma}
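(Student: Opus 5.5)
The plan mirrors the proof of \cref{lem:bound-by-eval}, with the role of \(u = e^{q}\) played directly by \(v = q\). First we obtain a pointwise bound on the integrand's first factor. For every \(\lambda \in \Lambda\) the hypothesis gives \(v_1 \le q(\lambda) \le v_2\). Hence \(q(\lambda)\) is an admissible value of the variable \(v\) in \cref{def:delta-diff}, and
\begin{align*}
  |\Phi_{\mathcal{D},\mathcal{B}}(z_0, z, q(\lambda))|
  \le \max_{v_1 \le v \le v_2} |\Phi_{\mathcal{D},\mathcal{B}}(z_0, z, v)|
  = \Phi_{\mathcal{D},\mathcal{B}}^{v_1,v_2}(z_0, z) \ .
\end{align*}
This maximum is finite and attained. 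The reason is that \(v \mapsto \sum_{i\in\mathcal{D}}\beta_i(z)v^i - e^{v(z-z_0)}\) is continuous in \(v\) for fixed \(z_0, z\), because \(\mathcal{D}\) is finite, and \([v_1, v_2]\) is compact.

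Next we move the absolute value inside the integral and split the product:
\begin{align*}
  |\textstyle\int_\Lambda \Phi_{\mathcal{D},\mathcal{B}}(z_0,z,q)e^{qz_0}Q \mathrm{d}\lambda|
  &\le \textstyle\int_\Lambda |\Phi_{\mathcal{D},\mathcal{B}}(z_0,z,q)|\, e^{qz_0}|Q| \mathrm{d}\lambda \ .
\end{align*}
This uses \(|e^{qz_0}| = e^{qz_0}\), which holds since the exponential is positive. Since \(e^{qz_0}|Q| \ge 0\), we may replace the first factor by its pointwise upper bound from the previous step without reversing the inequality. The constant \(\Phi_{\mathcal{D},\mathcal{B}}^{v_1,v_2}(z_0, z)\) then comes out of the integral, which gives the claimed estimate.

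No step is a genuine obstacle; the argument is elementary. The only points needing care are the ones already left implicit in \cref{lem:bound-by-eval}. First, the integrals must make sense, i.e.\ \(e^{qz_0}Q\) is integrable on \(\Lambda\), which we take as part of \(P[q,Q]\) being well defined. Second, we need nonnegativity of the weight \(e^{qz_0}|Q|\), which is what allows the pointwise bound to be integrated. Unlike \cref{lem:bound-by-eval}, no positivity assumption such as \(v_1 \ge 0\) is needed, because \(v\) enters through polynomials and an exponential, both defined on all of \(\mathbb{R}\).
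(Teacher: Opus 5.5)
Your proposal is correct and follows essentially the same argument as the paper: bound \(|\Phi_{\mathcal{D},\mathcal{B}}(z_0,z,q(\lambda))|\) pointwise by \(\Phi_{\mathcal{D},\mathcal{B}}^{v_1,v_2}(z_0,z)\), move the absolute value inside the integral, use \(|e^{qz_0}Q| = e^{qz_0}|Q|\), and pull the constant out. Your added remarks on continuity (so the maximum is attained) and on integrability only make explicit what the paper leaves implicit.
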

\begin{proof}
  For every \(\lambda \in \Lambda\), we have
  \begin{align*}
    | \Phi_{\mathcal{D},\mathcal{B}}(z_0,z,q(\lambda)) |
    \le \max_{v_1 \le v \le v_2}| \Phi_{\mathcal{D},\mathcal{B}}(z_0,z,v) |
    = \Phi_{\mathcal{D}, \mathcal{B}}^{v_1,v_2}(z_0, z) \ .
  \end{align*}
  Therefore,
  \begin{align*}
    &|\textstyle\int_\Lambda \Phi_{\mathcal{D},\mathcal{B}}(z_0,z,q)e^{qz_0}Q(\lambda) \mathrm{d}\lambda| \\
    &\le \textstyle\int_\Lambda |\Phi_{\mathcal{D},\mathcal{B}}(z_0,z,q)e^{qz_0}Q(\lambda)| \mathrm{d}\lambda \\
    &= \textstyle\int_\Lambda |\Phi_{\mathcal{D},\mathcal{B}}(z_0,z,q)| \, |e^{qz_0}Q(\lambda)| \mathrm{d}\lambda \\
    &\le \textstyle\int_\Lambda \Phi_{\mathcal{D},\mathcal{B}}^{v_1,v_2}(z_0,z)e^{qz_0}|Q| \mathrm{d}\lambda \\
    &= \Phi_{\mathcal{D},\mathcal{B}}^{v_1,v_2}(z_0,z)\textstyle\int_\Lambda e^{qz_0}|Q| \mathrm{d}\lambda
       \ .
  \end{align*}
\end{proof}
Combining \cref{lem:est-by-diff} and \cref{lem:bound-by-diff} into a single statement
yields the derivative analogue of \cref{cor:main1-eval}:
\begin{corollary}
  \label{cor:main1-diff}
  Let \(P = P[q, Q]\),
  \(z_0 \in [0, \infty)\),
  \(\mathcal{D} \subset \mathbb{N}\) finite
  and \(\mathcal{B} = \{\beta_i\}_{i \in \mathcal{D}}\) with \(\beta_i:[0, \infty) \to \mathbb{R}\).
  Furthermore, let \(v_1,v_2 \in \mathbb{R}\)
  such that \(v_1 \le q(\lambda) \le v_2\) for all \(\lambda \in \Lambda\).
  Then
  \begin{align*}
    \textstyle\sum_{i \in \mathcal{D}}\beta_i(z)\frac{\mathrm{d}^iP}{\mathrm{d}z^i}(z_0)
    &= P(z) \pm \Phi_{\mathcal{D},\mathcal{B}}^{v_1,v_2}(z_0, z)\textstyle\int_\Lambda e^{qz_0}|Q| \mathrm{d}\lambda
      \ ,
  \end{align*}
  where the notation \(y = x \pm \delta\) means \(y \in \{x + w\mid |w| \le |\delta|\}\).
\end{corollary}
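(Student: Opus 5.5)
The plan is to combine \cref{lem:est-by-diff} and \cref{lem:bound-by-diff}, in the same way that \cref{cor:main1-eval} combines \cref{lem:est-by-eval} and \cref{lem:bound-by-eval}.

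First, apply \cref{lem:est-by-diff} to \(P = P[q, Q]\), \(z_0\), \(\mathcal{D}\) and \(\mathcal{B}\). This gives the exact identity
\begin{align*}
  \textstyle\sum_{i \in \mathcal{D}}\beta_i(z)\frac{\mathrm{d}^iP}{\mathrm{d}z^i}(z_0)
  = P(z) + w \ ,
  \quad
  w = \textstyle\int_\Lambda \Phi_{\mathcal{D},\mathcal{B}}(z_0,z,q)e^{qz_0}Q \mathrm{d}\lambda \ .
\end{align*}
Second, since \(v_1 \le q(\lambda) \le v_2\) on \(\Lambda\), \cref{lem:bound-by-diff} gives
\begin{align*}
  |w| \le \Phi_{\mathcal{D},\mathcal{B}}^{v_1,v_2}(z_0, z)\textstyle\int_\Lambda e^{qz_0}|Q| \mathrm{d}\lambda \ .
\end{align*}
The right-hand side is non-negative, since it is a maximum of absolute values times the integral of a non-negative function. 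It therefore equals its own absolute value, so \(|w| \le |\delta|\) with \(\delta\) the stated error term. By the definition of the \(\pm\) notation, this is exactly the claim.

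The only point needing care is that the interchange of differentiation and integration in \cref{lem:est-by-diff} is legitimate. This is already taken for granted there: it holds, for instance, when \(q\) is bounded (guaranteed by \(v_1 \le q \le v_2\)) and \(Q\) is integrable on the bounded interval \(\Lambda\), by dominated convergence. I would also note that the maximum defining \(\Phi_{\mathcal{D},\mathcal{B}}^{v_1,v_2}\) is attained, because the function is continuous in \(v\) on the compact interval \([v_1, v_2]\).

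Apart from these points there is no real obstacle; the corollary is a direct restatement of the two lemmas.
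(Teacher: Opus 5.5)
Your proposal is correct and follows the paper's own route: the paper obtains this corollary exactly by combining \cref{lem:est-by-diff}, the exact error identity, with \cref{lem:bound-by-diff}, the bound on the error term, just as you do. Your side remarks are sound additions that the paper leaves implicit: that differentiation under the integral is justified by boundedness of \(q\) and integrability of \(Q\), and that the maximum is attained by continuity in \(v\).
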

Finally, we have the derivative version of \cref{thm:limit-eval}:
\begin{theorem}
  \label{thm:limit-diff}
  Let \(P = P[q, Q]\) and \(v_1, v_2 \in \mathbb{R}\) such that \(0 < v_1 \le q(\lambda) \le v_2\) for all \(\lambda \in \Lambda\).
  Let \(\{k_i\}_{i=0}^\infty \subseteq \mathbb{N}\) be an increasing sequence satisfying \(\sum_{i = 1}^\infty k_i^{-1} = \infty\).
  Finally, let \(\mathcal{D}(n) = \{k_i\}_{i=0}^n\), and for any \(z_0, z \in [0, \infty)\) let
  \begin{align*}
    \mathcal{B}(n) = \{\beta_{n, i}\}_{i \in \mathcal{D}(n)} = \arg \min_{\mathcal{B}(n)} \Phi_{\mathcal{D}(n),\mathcal{B}(n)}^{v_1,v_2}(z_0,z)
    \ .
  \end{align*}
  Then
  \begin{align*}
    \lim_{n \to \infty} \textstyle\sum_{i \in \mathcal{D}(n)}\beta_{n, i}(z)\frac{\mathrm{d}^iP}{\mathrm{d}z^i}(z_0) = P(z)
    \ .
  \end{align*}
\end{theorem}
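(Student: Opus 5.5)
The plan is to mirror the proof of \cref{thm:limit-eval}, with \cref{cor:main1-diff} in place of \cref{cor:main1-eval}. By \cref{cor:main1-diff} the extrapolation error is at most \(\Phi_{\mathcal{D}(n),\mathcal{B}(n)}^{v_1,v_2}(z_0,z)\int_\Lambda e^{qz_0}|Q|\mathrm{d}\lambda\). The integral factor does not depend on \(n\) and is assumed finite, so it suffices to show
\begin{align*}
  \lim_{n\to\infty}\Phi_{\mathcal{D}(n),\mathcal{B}(n)}^{v_1,v_2}(z_0,z) = 0 \ .
\end{align*}
Since \(\mathcal{B}(n)\) is a minimizer, it is enough to exhibit some coefficient family \(\tilde{\mathcal{B}}(n)\) for which \(\Phi^{v_1,v_2}_{\mathcal{D}(n),\tilde{\mathcal{B}}(n)}(z_0,z)\to 0\). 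The conclusion then follows from \(0\le\Phi_{\mathcal{D}(n),\mathcal{B}(n)}^{v_1,v_2}\le\Phi_{\mathcal{D}(n),\tilde{\mathcal{B}}(n)}^{v_1,v_2}\) and the squeeze theorem.

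To build \(\tilde{\mathcal{B}}(n)\), I rescale so that Müntz–Szász applies on \([0,1]\). Put \(\mu = v/v_2\), so that \(v\in[v_1,v_2]\) corresponds to \(\mu\in[v_1/v_2,1]\subset(0,1]\); here the hypothesis \(v_1>0\) is essential. Define \(f\in C[0,1]\) by \(f(\mu)=e^{v_2\mu(z-z_0)}\) for \(\mu\ge v_1/v_2\), and by linear interpolation from \(f(0)=0\) to \(e^{v_1(z-z_0)}\) on \([0,v_1/v_2]\). This is the analogue of \(f_z\) in \cref{thm:limit-eval}.

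Müntz–Szász requires the exponent sequence to start at \(0\), so I split into two cases.
\begin{itemize}
  \item If \(k_0=0\), apply Müntz–Szász directly to \(\{\mu^{k_i}\}_{i\ge0}\). This is legitimate because the \(k_i\) are increasing and \(\sum_{i\ge1}k_i^{-1}=\infty\).
  \item If \(k_0>0\), apply Müntz–Szász to the exponent set \(\{0\}\cup\{k_i\}_{i\ge0}\); the divergence condition still holds. Evaluating at \(\mu=0\), where \(f(0)=0\) and every \(\mu^{k_i}\) vanishes, shows that the coefficient of the constant term tends to \(0\). Exactly as in \cref{thm:limit-eval}, that term can then be dropped without losing uniform convergence.
\end{itemize}
In either case we obtain coefficients \(\gamma_{n,i}\) with \(\max_{0\le\mu\le1}|\sum_{i=0}^n\gamma_{n,i}\mu^{k_i}-f(\mu)|\to0\).

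Finally, set \(\tilde\beta_{n,k_i}(z)=\gamma_{n,i}v_2^{-k_i}\). Then \(\sum_i\tilde\beta_{n,k_i}v^{k_i}=\sum_i\gamma_{n,i}\mu^{k_i}\), and on \([v_1,v_2]\) we have \(e^{v(z-z_0)}=f(\mu)\). Hence \(\Phi^{v_1,v_2}_{\mathcal{D}(n),\tilde{\mathcal{B}}(n)}(z_0,z)\) is at most the sup-norm error over \([0,1]\), which tends to \(0\).

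The main subtle step is this constant-term/shift bookkeeping: making sure the Müntz hypothesis with the exponent \(0\) is met and that the extension \(f\) is continuous with \(f(0)=0\). Everything else is routine. I would also note that the argument implicitly uses finiteness of \(\int_\Lambda e^{qz_0}|Q|\mathrm{d}\lambda\) and differentiability under the integral sign, both already used in \cref{lem:est-by-diff}.
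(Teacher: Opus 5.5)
Your proposal is correct and follows the route the paper intends; the paper's proof only says it is analogous to \cref{thm:limit-eval}. That route is: reduce via \cref{cor:main1-diff} to showing the minimal $\Phi$ vanishes, rescale by $v_2$, extend continuously to $[0,1]$ with $f(0)=0$, apply Müntz--Szász, drop the constant term, and conclude by the squeeze theorem. Your explicit case split on whether $k_0=0$ handles the constant-term bookkeeping slightly more cleanly than \cref{thm:limit-eval}, which simply replaces the first exponent by $0$ and then sets its coefficient to zero.
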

\begin{proof}
  The proof is analogous to that of \cref{thm:limit-eval}.
\end{proof}

\subsection{Final remarks on attenuating functions}

In summary,
attenuating functions form a special class
that is sufficiently restricted so as to enable extrapolation
when \(q\) has known upper and lower bounds.
Sufficient data for this includes evaluations and derivatives,
though other linear functionals are likely to work as well.
The Müntz–Szász theorem provides simple conditions under which
the extrapolation error is guaranteed to vanish in the limit as more data becomes available.
Finally, having an upper bound for \(\int_\Lambda|Q|\mathrm{d}\lambda\)
immediately yields an upper bound for said error, even in the ``pre-asymptotic'' regime.
A noteworthy limitation is that the obtained bounds for the error depend on
\begin{align*}
  \min_{\mathcal{A}(n)} \Psi_{\mathcal{Z}(n), \mathcal{A}(n)}^{u_1,u_2}(z)
  \quad \text{and} \quad
  \min_{\mathcal{B}(n)} \Phi_{\mathcal{D}(n),\mathcal{B}(n)}^{v_1,v_2}(z_0,z)
  \ , 
\end{align*}
which, in the absence of more careful analysis,
can only be determined experimentally on a case-by-case basis.

\section{Using inherent radiance segmentation}
\label{sec:using-irs}

In this section we show how
inherent radiance segmentation reduces color restoration
to extrapolation of attenuating functions,
which we already covered in \cref{sec:att-funcs}.

\begin{definition}
  Given a photograph with pixel intensities
  \begin{align*}
    P_{i,j}(z_i) = \textstyle\int_\Lambda(e^{-cz_i}L_i + (1 - e^{-cz_i})B)S_j\mathrm{d}\lambda
    \ ,
  \end{align*}
  as in \eqref{eq:imf},
  an \emph{inherent radiance segment} is a subset \(\mathcal{I}\) of pixel indices
  satisfying \(L_i = L_\mathcal{I}\) for all \(i \in \mathcal{I}\),
  where \(L_\mathcal{I}\) denotes the shared inherent radiance within \(\mathcal{I}\).
  The set of all inherent radiance segments for the image is denoted by
  \begin{align*}
    P_\triangle
    &= \{
      \mathcal{I} \mid
      L_i = L_\mathcal{I} \ \forall i \in \mathcal{I}
      \} \ .
  \end{align*}  
\end{definition}
For now, we assume that every pixel \(i\)
belongs to some known segment \(\mathcal{I} \in P_\triangle\).
Obtaining \emph{some} segment is trivial: simply pick \(\mathcal{I} = \{i\}\),
though this will yield but a single point for attenuating function extrapolation,
resulting in poor error bounds.
On the other hand, obtaining non-trivial segments is exactly that\textemdash
non-trivial\textemdash
and in the fully general setting it is ill-posed.
Instead of showing this by fully describing the solution space
like we did in \cref{sec:ill-posedness} for color restoration,
we resort to merely mentioning metamerism.
Indeed, it is well-known that even in clear air,
two pixels of identical color need not correspond to the same (inherent) radiance,
making it impossible to determine whether these two pixels
should belong to the same segment.
This naturally raises the question of whether
achieving non-trivial inherent radiance segmentation is a hopeless endeavor;
however, as we shall see in \cref{sec:obtaining-irs},
it is theoretically possible in a sufficiently constrained, albeit highly idealized, setting.

In any case,
the utility of such segmentation
comes from the fact that the pixel intensities within each segment \(\mathcal{I}\)
lose their explicit dependence on \(i\)
and instead become dependent only on \(z_i\),
bringing us closer to the realm of attenuating functions.
To capture this in notation, we rely on the following definition,
which also includes some auxiliary quantities:
\begin{definition}
  \label{def:Pbar}
  For any \(\mathcal{I} \in P_\triangle\) and any channel \(j\), define
  \begin{align*}
    P_{\mathcal{I}, j}(z)
    &= \textstyle\int_\Lambda(e^{-cz}L_{\mathcal{I}} + (1 - e^{-cz})B)S_j\mathrm{d}\lambda \ , \\  
    P_j(\infty)
    &= \textstyle\int_\Lambda BS_j\mathrm{d}\lambda
      = \lim_{z \to \infty} P_{i,j}(z) \quad \forall i \ , \\
    \bar{P}_{\mathcal{I},j}(z)
    &= P_{\mathcal{I},j}(z) - P_j(\infty)
      = \textstyle\int_\Lambda e^{-cz}(L_{\mathcal{I}} - B)S_j\mathrm{d}\lambda
      \ .
  \end{align*}
\end{definition}
For any \(\mathcal{I} \in P_\triangle\),
the function \(P_{\mathcal{I}, j}: [0, \infty) \to [0, \infty)\)
describes the pixel intensity associated with the inherent radiance \(L_\mathcal{I}\)
at any target distance \(z \in [0, \infty)\).
The quantity \(P_j(\infty)\) is the pixel intensity ``at infinity'', or ``at the horizon''.
Whenever \(P_j(\infty)\) is known, we can easily convert back and forth between
the original pixel intensities \(P_{\mathcal{I}, j}(z_i)\)
and the ``reduced'' pixel intensities \(\bar{P}_{\mathcal{I},j}(z_i)\),
which is helpful precisely because \(\bar{P}_{\mathcal{I},j}\) is an attenuating function.
In other words, the task of estimating \(P_{\mathcal{I},j}(z)\)
from the \(z_i\) and the \(P_{\mathcal{I},j}(z_i)\) for \(i \in \mathcal{I}\),
becomes equivalent to that of estimating
\(\bar{P}_{\mathcal{I},j}(z)\) from the \(z_i\) and the \(\bar{P}_{\mathcal{I},j}(z_i)\).
This is a canonical case of attenuating function extrapolation from evaluations
described in \cref{subsec:evaluations}.
The only additional data required are upper and lower bounds
on the beam attenuation coefficient \(c\) across the visible spectrum \(\Lambda\).
The following is a specialization of \cref{cor:main1-eval} to the color restoration setting:
\begin{theorem}
  \label{thm:main1-eval}
  Let  \(u_1,u_2 \in \mathbb{R}\) with
  \(u_1 \le e^{-c(\lambda)} \le u_2\) for all \(\lambda \in \Lambda\).
  Let \(\mathcal{I} \in P_\triangle\),
  \(\mathcal{Z}(\mathcal{I}) = \{z_i\}_{i \in \mathcal{I}} \subset [0, \infty)\),
  and let \(\mathcal{A}(\mathcal{I}) = \{\alpha_i\}_{i \in \mathcal{I}}\) with \(\alpha_i: [0, \infty) \to \mathbb{R}\).
  Then
  \begin{align*}
    \bar{P}_{\mathcal{I},j}(z) = \textstyle\sum_{i \in \mathcal{I}}\alpha_i(z)\bar{P}_{\mathcal{I},j}(z_i)
      \ \pm \
      \Psi_{\mathcal{Z}(\mathcal{I}), \mathcal{A}(\mathcal{I})}^{u_1,u_2}(z)\textstyle\int_\Lambda |Q_{\mathcal{I},j}| \mathrm{d}\lambda
      ,
  \end{align*}
  where \(Q_{\mathcal{I},j} = (L_\mathcal{I} - B)S_j\).
\end{theorem}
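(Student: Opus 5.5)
The plan is to recognize \(\bar{P}_{\mathcal{I},j}\) as an attenuating function and then apply \cref{cor:main1-eval} directly. By \cref{def:Pbar},
\begin{align*}
  \bar{P}_{\mathcal{I},j}(z) = \textstyle\int_\Lambda e^{-cz}(L_\mathcal{I} - B)S_j\mathrm{d}\lambda \ .
\end{align*}
Setting \(q = -c\) and \(Q = Q_{\mathcal{I},j} = (L_\mathcal{I} - B)S_j\), this is exactly \(\bar{P}_{\mathcal{I},j} = P[q, Q]\) in the sense of \cref{def:att-func}. The domain \(\Lambda\) is the bounded visible interval, and \(q, Q\) are real-valued functions on \(\Lambda\).

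Next I check the remaining hypotheses of \cref{cor:main1-eval}. The bound \(u_1 \le e^{-c(\lambda)} \le u_2\) is precisely \(u_1 \le e^{q(\lambda)} \le u_2\). The sample set is \(\mathcal{Z}(\mathcal{I}) = \{z_i\}_{i\in\mathcal{I}}\), and the coefficients are \(\mathcal{A}(\mathcal{I}) = \{\alpha_i\}_{i\in\mathcal{I}}\). The only bookkeeping needed is to re-index \(\mathcal{I}\) as \(1,\dots,n\) via a bijection. One caveat: \(\mathcal{Z}(\mathcal{I})\) is written as a set, but distances may repeat within \(\mathcal{I}\). I would treat it as an indexed family, which is harmless because the proof of \cref{lem:est-by-eval} is a termwise linearity computation and never uses distinctness of the \(z_i\).

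Since \(L_i = L_\mathcal{I}\) for \(i \in \mathcal{I}\), each observed reduced value satisfies \(P_{i,j}(z_i) - P_j(\infty) = \bar{P}_{\mathcal{I},j}(z_i)\). \cref{cor:main1-eval} then gives
\begin{align*}
  \textstyle\sum_{i\in\mathcal{I}}\alpha_i(z)\bar{P}_{\mathcal{I},j}(z_i) = \bar{P}_{\mathcal{I},j}(z) \pm \Psi_{\mathcal{Z}(\mathcal{I}),\mathcal{A}(\mathcal{I})}^{u_1,u_2}(z)\textstyle\int_\Lambda|Q_{\mathcal{I},j}|\mathrm{d}\lambda \ .
\end{align*}
Finally, the relation \(y = x \pm \delta\) is symmetric, since \(|w| \le |\delta|\) is equivalent to \(|-w| \le |\delta|\). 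So I can swap the two sides to obtain the stated form.

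There is essentially no obstacle. The only points needing care are the index and multiset bookkeeping above, and the fact that \cref{def:delta-sample} asks for \(0 \le u_1 < u_2\). Nonnegativity of \(u_1\) is automatic: we may replace \(u_1\) by \(\max(u_1,0)\), since \(e^{-c} > 0\), and this only shrinks the maximum, so the bound stays valid. The equality case \(u_1 = u_2\) is degenerate and the maximum still makes sense over a single point. If desired, one can instead verify \cref{lem:bound-by-eval} pointwise directly, which needs neither condition.
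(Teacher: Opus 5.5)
Your proposal is correct and matches the paper's proof, which is just \cref{cor:main1-eval} applied with \(\bar{P}_{\mathcal{I},j} = P[-c, Q_{\mathcal{I},j}]\). Your extra bookkeeping (re-indexing \(\mathcal{I}\), the symmetry of the \(\pm\) notation, and the \(0 \le u_1 < u_2\) convention) only makes explicit what the paper leaves implicit.
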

\begin{proof}
  This is \cref{cor:main1-eval} with \(\bar{P}_{\mathcal{I},j} = P[-c, Q_{\mathcal{I},j}]\).
\end{proof}
The practical meaning of \cref{thm:main1-eval} is that choosing
\(\mathcal{A}(\mathcal{I}) = \{\alpha_i\}_{i \in \mathcal{I}}\) so as to minimize \(\Psi_{\mathcal{Z}(\mathcal{I}), \mathcal{A}(\mathcal{I})}^{u_1,u_2}(z)\)
for some target distance \(z\),
yields the estimate
\begin{align*}
  P_{\mathcal{I},j}(z)
  &= \bar{P}_{\mathcal{I},j}(z) + P_j(\infty) \\
  &\approx \textstyle\sum_{i \in \mathcal{I}}\alpha_i(z)\bar{P}_{\mathcal{I},j}(z_i) + P_j(\infty)
    \ .
\end{align*}
Furthermore, if we somehow know an upper bound on
\begin{align*}
  \textstyle\int_\Lambda |Q_{\mathcal{I},j}| \mathrm{d}\lambda
  &= \textstyle\int_\Lambda |(L_\mathcal{I} - B)S_j| \mathrm{d}\lambda
    \ ,
\end{align*}
then we immediately get an upper bound for the estimation error as well.
One way of obtaining such a bound from
individual bounds on \(L_\mathcal{I}\), \(B\) and \(S_j\),
is to use Hölder's inequality together with the triangle inequality:
\begin{align*}
  \textstyle\int_\Lambda |(L_\mathcal{I} - B)S_j| \mathrm{d}\lambda
  &\le \lVert L_\mathcal{I} - B \rVert_p \lVert S_j \rVert_q \\
  &\le (\lVert L_\mathcal{I} \rVert_p + \lVert B \rVert_p) \lVert S_j \rVert_q
  \ ,
\end{align*}
for any \(p,q \in [1, \infty]\) with \(1/p + 1/q = 1\).
Regardless of what this bound is,
we know from \cref{thm:limit-eval} that
the estimation error is guaranteed to vanish
in the limit \(|\mathcal{Z}(\mathcal{I})| \to \infty\),
provided \(\sum_{z_i \in \mathcal{Z}(\mathcal{I})}z_i^{-1} \to \infty\).
In simpler terms: once we have identified a region in the photograph
that corresponds to an inherent radiance segment,
then the estimation error for \(P_{\mathcal{I},j}(z)\)
can be made arbitrarily small
by increasing the camera resolution,
provided the region keeps resolving new distances.
The constraint \(\sum_{z_i \in \mathcal{Z}(\mathcal{I})}z_i^{-1} \to \infty\)
will generally be satisfied for reasonable regions,
e.g., if there is some maximal distance \(z_{\text{max}}\)
within such a region, then
\begin{align*}
  \textstyle\sum_{z_i \in \mathcal{Z}(\mathcal{I})}z_i^{-1} \ge \sum_{z_i \in \mathcal{Z}(\mathcal{I})}z_{\text{max}}^{-1} = z_{\text{max}}^{-1}\sum_{z_i \in \mathcal{Z}(\mathcal{I})}1 \to \infty
  \ .
\end{align*}
The conclusion to draw from this section is that
the fatal ill-posedness of color restoration in the fully general setting
can be meaningfully resolved
under the additional constraint of known inherent radiance segmentation.
The bad news is that obtaining such segmentation
is necessarily as ill-posed as the original problem.
In \cref{sec:obtaining-irs}, we identify sufficient conditions
for guaranteeing a unique algorithmic solution.

\section{Obtaining inherent radiance segmentation}
\label{sec:obtaining-irs}

In \cref{sec:using-irs},
we saw how color restoration can be reduced to inherent radiance segmentation.
Here, we give sufficient conditions under which the latter
can be determined algorithmically.
Instead of stating these conditions for a single image
viewed as a finite list of pixel values,
it will be more convenient to make asymptotic statements
about a sequence of images with increasingly high resolution
for the same underlying scene.
To model this formally,
we consider a hypothetical continuum of pixel values
associated with each point in the unit square \([0, 1]^2\),
which is then sampled by an infinite coordinate sequence
\(\{(x_i, y_i)\}_{i=1}^\infty\) that is dense in \([0, 1]^2\).
Any finite-resolution image may then be regarded
as a truncation of this infinite process by restricting \(1 \le i \le n\)
for some positive integer \(n\),
allowing us to reason about the regime \(n \to \infty\).
Furthermore, we immediately consider the ``reduced'' pixel intensities
obtained by subtracting \(P_j(\infty)\) from the entire image, as in \cref{def:Pbar},
though this is mostly for cleaner notation.
Finally, we omit the channel index \(j\),
since cross-channel information is never used,
and we write \(q = -c\) and \(Q[x, y] = (L[x, y] - B)S_j\),
where \(L[x, y]: \Lambda \to \mathbb{R}\) is the inherent radiance associated with the point \((x, y) \in [0, 1]^2\).

\begin{definition}
  Define the \emph{profile} \(p: [0, 1]^2 \to \mathbb{R}\) by 
  \begin{align*}
    p: (x, y) \mapsto \textstyle\int_\Lambda e^{qz(x, y)}Q[x, y]\mathrm{d}\lambda \ ,
  \end{align*}
  where \(q, Q[x, y] : \Lambda \to \mathbb{R}\) and \(z: [0, 1]^2 \to [0, \infty)\).
  Also, let
  \begin{align*}
    g(x, y) &= (z(x, y), p(x, y))
              \ , \\
    \mathcal{H}(x, y) &= \{
              (x', y') \mid Q[x', y'] = Q[x, y]
              \} \ .
  \end{align*}
  Finally, let
  \(\{(x_i,y_i)\}_{i=1}^\infty\) be a sequence dense in \([0, 1]^2\).
\end{definition}

Intuitively, \(p(x, y)\) is the reduced pixel intensity at the point \((x, y)\) in the image,
while \(z(x, y)\) is the associated distance to the scene.
The pair \(g(x, y) = (z(x, y), p(x, y))\) is merely a convenient datastructure,
while \(\mathcal{H}(x, y) \subset [0, 1]^2\) is the region in the image where
the inherent radiance is the same as at the point \((x, y)\).
This region may be conceptualized as the continuous analogue of an inherent radiance segment,
and it is maximal in the sense of containing all points \((x', y')\) with \(Q[x', y'] = Q[x, y]\).
We proceed by defining certain constraints on \(g\) and \(\mathcal{H}\)
that are involved in guaranteeing well-posed inherent radiance segmentation.
For this, we rely on a few basic concepts from topology:
\begin{definition}
  We will say that \(p\) is \emph{connected} if
  \(\mathcal{H}(x, y)\) is always connected.
  We will say that \(p\) is \emph{discrete} if
  \(\mathcal{H}(x, y)\) is always semi-open,
  i.e., its interior is dense in its closure.
  Finally, for any \(\varepsilon > 0\), we say that \(p\) is \(\varepsilon\)-\emph{separable} if
  \begin{align*}
    Q[x, y] \ne Q[x', y']
    \implies
    \lVert g(x, y) - g(x', y') \rVert \ge \varepsilon
    \ .
  \end{align*}
\end{definition}

Intuitively, \(p\) being connected
means that each continuous inherent radiance segment \(\mathcal{H}(x, y)\)
forms a single ``island'', instead of an ``archipelago''.
This constraint is not essential, but we keep it for simplicity.
Having a discrete \(p\) forces these ``islands'' to be two-dimensional,
avoiding many pathologies,
and it also forbids the spatial gradient of \(Q\) for every \(\lambda\)
to be non-zero where it is defined.
This admittedly strong assumption is central to our argument,
and future research should focus on relaxing it.
Finally, \(\varepsilon\)-separability of \(p\) is designed to take care of
any potential issues associated with metamerism.
Some form of this is likely necessary,
unless working with a perfect hyperspectral camera.
The main result of this section is given in \cref{thm:main2},
but to get there we prove a few basic facts about \(g(\mathcal{H}(x, y))\).
\begin{lemma}
  \label{lem:g(H)-connected}
  If \(z\) is continuous and \(p\) is connected,
  then \(g(\mathcal{H}(x, y))\) is connected.
\end{lemma}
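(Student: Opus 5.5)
The plan is to show that \(g\) restricted to \(\mathcal{H}(x, y)\) is continuous, and then use the standard fact that the continuous image of a connected space is connected. Since \(p\) is connected, \(\mathcal{H}(x, y)\) is connected in the subspace topology, so continuity of \(g|_{\mathcal{H}(x, y)}\) is all that remains.

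Write \(g = (z, p)\). It suffices to check each component separately, because a map into \(\mathbb{R}^2\) is continuous if and only if both of its coordinate functions are.

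The first component is immediate. By hypothesis \(z\) is continuous on \([0, 1]^2\), so its restriction to the subspace \(\mathcal{H}(x, y)\) is continuous.

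The second component uses the key observation that \(Q\) is constant on \(\mathcal{H}(x, y)\). By definition, \(Q[x', y'] = Q[x, y] =: Q_0\) for every \((x', y') \in \mathcal{H}(x, y)\). Hence on this set
\begin{align*}
  p(x', y') = h(z(x', y')) \ , \qquad h(t) = \textstyle\int_\Lambda e^{qt}Q_0\mathrm{d}\lambda \ .
\end{align*}
Here \(h = P[q, Q_0]\) is an attenuating function in the sense of \cref{def:att-func}. So \(p|_{\mathcal{H}(x, y)} = h \circ z|_{\mathcal{H}(x, y)}\), and it is enough to show that \(h\) is continuous on \([0, \infty)\).

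The main obstacle is this continuity of \(h\). It requires passing a limit inside the integral, which needs mild integrability assumptions the paper leaves implicit.

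I would use dominated convergence. Fix \(t_0\) and let \(t \to t_0\) with \(|t - t_0| \le 1\). For each \(\lambda\), \(e^{q(\lambda)t} \to e^{q(\lambda)t_0}\). The integrand is dominated by
\begin{align*}
  |Q_0|\max\bigl(e^{q(t_0-1)}, e^{q(t_0+1)}\bigr) \ .
\end{align*}
This dominating function is integrable under the standing assumptions used in \cref{sec:att-funcs}: either \(q\) is bounded, as in the physical setting where \(u_1 \le e^{q} \le u_2\) with \(u_1 > 0\), together with \(\int_\Lambda |Q_0|\mathrm{d}\lambda < \infty\), or more generally the integrals defining \(h\) converge locally uniformly. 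These are exactly the conditions under which \(P[q, Q]\) is meaningful.

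With both components continuous, \(g|_{\mathcal{H}(x, y)}\) is continuous. Its image \(g(\mathcal{H}(x, y))\) is therefore connected, which proves \cref{lem:g(H)-connected}. It also shows that this image lies on the curve \(t \mapsto (t, h(t))\).
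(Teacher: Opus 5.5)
Your proposal is correct and follows the paper's proof. Both restrict \(g=(z,p)\) to \(\mathcal{H}(x,y)\), note that it is continuous there, and conclude because continuous images of connected spaces are connected. The one difference is that you justify the continuity of \(p\) on \(\mathcal{H}(x,y)\), which the paper simply asserts. You do this by writing \(p = h\circ z\) there, since \(Q\) is constant on \(\mathcal{H}(x,y)\), and then applying dominated convergence. This makes explicit the mild integrability assumption that the paper leaves implicit.
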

\begin{proof}  
  Since \(z\) is continuous on the entire \([0, 1]^2\),
  then it is also continuous on \(\mathcal{H}(x, y) \subseteq [0, 1]^2\).
  Since \(p\) is also continuous on \(\mathcal{H}(x, y)\),
  then so is \(g: (x, y) \mapsto (z(x, y), p(x, y))\).
  But \(\mathcal{H}(x, y)\) is connected as a consequence of \(p\) being connected,
  and the image of a continuous map on a connected space is always connected,
  hence \(g(\mathcal{H}(x, y))\) is connected.
\end{proof}

\begin{lemma}
  \label{lem:g(H)-bounded}
  If \(z\) is continuous,
  then \(g(\mathcal{H}(x, y))\) is bounded.
\end{lemma}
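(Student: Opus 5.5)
Since \(g(\mathcal{H}(x, y)) \subset \mathbb{R}^2\), the plan is to bound each coordinate separately on \(\mathcal{H}(x, y)\): the first is given by \(z\), the second by \(p\). Once both \(z\) and \(p\) are bounded there, the image of \(\mathcal{H}(x, y)\) under \(g = (z, p)\) lies in a rectangle and is therefore bounded.

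The first coordinate is immediate. The domain \([0, 1]^2\) is compact and \(z\) is continuous on it, so by the extreme value theorem \(z\) attains a maximum \(z_{\max} < \infty\). Since \(z \ge 0\) by definition, we get \(0 \le z \le z_{\max}\) on all of \([0, 1]^2\), and in particular on \(\mathcal{H}(x, y)\).

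For the second coordinate, we use that \(Q[x', y'] = Q[x, y] =: Q\) for every \((x', y') \in \mathcal{H}(x, y)\). Hence on this set \(p\) is a single function of distance,
\begin{align*}
  p(x', y') = \textstyle\int_\Lambda e^{q z(x', y')} Q \,\mathrm{d}\lambda = P[q, Q](z(x', y')) \ .
\end{align*}
In the color restoration setting \(q = -c \le 0\), so \(e^{q\zeta} \le 1\) for all \(\zeta \ge 0\). This gives \(|p| \le \int_\Lambda |Q| \,\mathrm{d}\lambda\) on \(\mathcal{H}(x, y)\), which bounds the second coordinate.

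The main obstacle is that the definition of the profile does not explicitly require \(q \le 0\) or \(Q\) integrable. To avoid relying on the sign of \(q\), I would instead use the bounds \(u_1 \le e^{q} \le u_2\) assumed earlier. Then for \(0 \le \zeta \le z_{\max}\) we have \(e^{q\zeta} \le \max(1, u_2^{z_{\max}})\), which bounds the attenuating function \(P[q, Q]\) on the compact interval \([0, z_{\max}]\). Alternatively, one can argue that \(\zeta \mapsto \int_\Lambda e^{q\zeta} Q \,\mathrm{d}\lambda\) is continuous by dominated convergence and so bounded on \([0, z_{\max}]\). Either route needs \(\int_\Lambda |Q| < \infty\), which I would take as implicit in the profile being well defined. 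Composing this bound with the bound on \(z\) completes the proof.
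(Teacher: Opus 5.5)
Your proof is correct and follows the same route as the paper. It bounds $z$ by continuity on a compact set, bounds $p$ on $\mathcal{H}(x,y)$ using $Q[x',y']=Q[x,y]$ there together with the bound on $z$, and concludes from $g(\mathcal{H}(x,y))\subseteq z(\mathcal{H}(x,y))\times p(\mathcal{H}(x,y))$. The only difference is in bounding $p$: the paper uses Cauchy--Schwarz, $p(x',y')^2\le\int_\Lambda e^{2qz(x',y')}\,\mathrm{d}\lambda\int_\Lambda Q[x,y]^2\,\mathrm{d}\lambda$, where you use $\sup e^{q\zeta}\cdot\int_\Lambda|Q|\,\mathrm{d}\lambda$, and it tacitly relies on the same kind of integrability assumptions (there $Q\in L^2$ and $\int_\Lambda e^{2q\zeta}\,\mathrm{d}\lambda$ bounded for bounded $\zeta$) that you state explicitly.
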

\begin{proof}
  Since \([0, 1]^2\) is compact, then the closure \(\overline{\mathcal{H}(x, y)}\) of \(\mathcal{H}(x, y) \subseteq [0, 1]^2\)
  is also compact.
  Since \(z\) is continuous on \([0, 1]^2\),
  then it is also continuous on \(\overline{\mathcal{H}(x, y)}\),
  hence \(z(\overline{\mathcal{H}(x, y)})\) is compact and therefore bounded.
  Furthermore, for any \((x', y') \in \mathcal{H}(x, y)\),
  the Cauchy-Schwarz inequality yields
  \begin{align*}
    p(x', y')^2
    &\le \textstyle \int_\Lambda e^{2qz(x', y')} \mathrm{d}\lambda \textstyle \int_\Lambda Q[x, y]^2 \mathrm{d}\lambda
    \ .
  \end{align*}
  Since \(z(\mathcal{H}(x, y))\) is bounded,
  then the right-hand side of the above inequality is also bounded,
  which implies that \(p(\mathcal{H}(x, y))\) is bounded.
  But since
  \begin{align*}
    g(\mathcal{H}(x, y)) \subseteq z(\mathcal{H}(x, y)) \times p(\mathcal{H}(x, y))
    \ ,
  \end{align*}
  then \(g(\mathcal{H}(x, y))\) is bounded as well.
\end{proof}
Now, we prove a few facts about \(\{g(x_i, y_i)\}_{i=1}^\infty\).
\begin{lemma}
  \label{lem:g-dense}
  Suppose \(z\) is continuous and \(p\) is discrete.
  Then \(\{g(x_i, y_i)\}_{i=1}^\infty\)
  has a subsequence dense in \(g(\mathcal{H}(x, y))\).
\end{lemma}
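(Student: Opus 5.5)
The plan is to choose the subsequence to be exactly the sample points that fall in the interior of \(\mathcal{H}(x, y)\). We then show that their images under \(g\) are dense in \(g(\mathcal{H}(x, y))\). Write \(\mathcal{H} = \mathcal{H}(x, y)\) and \(\mathcal{H}^\circ\) for its interior in \([0,1]^2\). Let \(\{(x_{i_k}, y_{i_k})\}_{k}\) be the subsequence of those sample points lying in \(\mathcal{H}^\circ\), in their original order. Since \(p\) is discrete, \(\mathcal{H}^\circ\) is nonempty: \(\mathcal{H}\) contains \((x, y)\), so its closure is nonempty, and the interior is dense in the closure. A nonempty open subset of \([0,1]^2\) meets the dense sequence infinitely often, so this subsequence is indeed infinite.

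The key observation is that \(g\) is continuous on \(\mathcal{H}^\circ\), even though \(p\) need not be continuous on all of \([0,1]^2\). On \(\mathcal{H}\) we have \(Q[x', y'] = Q[x, y]\), so \(p(x', y') = \int_\Lambda e^{q z(x', y')} Q[x, y]\,\mathrm{d}\lambda\). This is the composition of the continuous map \(z\) with the function \(h(t) = \int_\Lambda e^{qt} Q[x,y]\,\mathrm{d}\lambda\). The function \(h\) is continuous in \(t\) by dominated convergence, provided \(Q[x,y]\) is integrable (implicitly assumed, as in \cref{lem:g(H)-bounded}) and \(q\) is bounded, or more generally \(e^{qt}\) is locally dominated. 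Hence \(g(x',y') = (z(x',y'), h(z(x',y')))\) is continuous on all of \(\mathcal{H}\), which is the same fact already used in \cref{lem:g(H)-connected}.

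For density, fix \(w = g(x', y')\) with \((x', y') \in \mathcal{H}\) and any \(\delta > 0\). By continuity of \(g|_{\mathcal{H}}\), there is a relative neighborhood \(U\) of \((x', y')\) in \(\mathcal{H}\) with \(g(U)\) inside the \(\delta\)-ball around \(w\). Since \((x',y') \in \mathcal{H} \subseteq \overline{\mathcal{H}} = \overline{\mathcal{H}^\circ}\) by discreteness, \(U\) contains a point of \(\mathcal{H}^\circ\). More precisely, the open set \(V \cap \mathcal{H}^\circ\) is nonempty, where \(V\) is an open set in \([0,1]^2\) with \(U = V\cap\mathcal{H}\). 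By density of the sample sequence, \(V \cap \mathcal{H}^\circ\) contains some \((x_{i_k}, y_{i_k})\), and it contains infinitely many of them. Therefore \(\lVert g(x_{i_k}, y_{i_k}) - w\rVert < \delta\), which gives density in \(g(\mathcal{H})\).

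The main obstacle is the continuity step. We must make precise that \(p\) restricted to \(\mathcal{H}\) depends continuously on \(z\), and this needs a mild integrability assumption on \(Q\) and \(q\) that the paper uses implicitly. The restriction to interior points is what makes the topology work: a sample point merely in \(\overline{\mathcal{H}}\setminus\mathcal{H}\) would carry a different \(Q\) and could map far away. Discreteness guarantees that nothing is lost by discarding boundary samples.
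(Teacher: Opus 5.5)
Your proof is correct and follows the paper's argument: semi-openness yields a subsequence of sample points inside \(\mathcal{H}(x, y)\) that is dense there, and the continuity of \(g\) on \(\mathcal{H}(x, y)\) carries this density over to \(g(\mathcal{H}(x, y))\). You are more explicit than the paper in choosing the subsequence as the interior samples, and in justifying the continuity of \(p\) on \(\mathcal{H}(x, y)\) by writing it as \(t \mapsto \int_\Lambda e^{qt}Q[x, y]\,\mathrm{d}\lambda\) composed with \(z\), under integrability assumptions the paper leaves implicit.
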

\begin{proof}
  Since \(\{(x_i, y_i)\}_{i=1}^\infty\) is dense in \([0, 1]^2\)
  and \(\mathcal{H}(x, y)\) is semi-open,
  then there is some subsequence \(\{(x_i, y_i)\}_{i \in \mathcal{J}}\),
  with \(\mathcal{J} \subset \mathbb{N}\),
  that is dense in \(\mathcal{H}(x, y)\).
  Therefore, if \(g\) is continuous on \(\mathcal{H}(x, y)\),
  then \(\{g(x_i, y_i)\}_{i \in \mathcal{J}}\) is dense in \(g(\mathcal{H}(x, y))\).
  But \(z\) and \(p\) are both continuous on \(\mathcal{H}(x, y)\),
  hence so is \(g\).
\end{proof}
\begin{lemma}
  \label{lem:cover}
  Suppose \(z\) is continuous and \(p\) is discrete.
  Then for any \(\varepsilon > 0\), there exists a finite \(\mathcal{I} \subset \mathbb{N}\) satisfying
  \begin{align*}
    \{(x_i,y_i)\}_{i \in \mathcal{I}}
    &\subset \mathcal{H}(x, y) \ , \\
    g(\mathcal{H}(x, y))
    &\subset \cup_{i \in \mathcal{I}} \mathcal{B}_\varepsilon(g(x_i, y_i))
    \ ,
  \end{align*}
  where \(\mathcal{B}_\varepsilon(g(x_i, y_i)) \subset \mathbb{R}^2\) denotes the open ball of radius \(\varepsilon\) centered on \(g(x_i, y_i)\).
\end{lemma}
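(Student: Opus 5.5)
We will argue by compactness: cover the bounded set \(g(\mathcal{H}(x, y))\) by finitely many small balls, then use the dense subsequence of \cref{lem:g-dense} to put a sample point inside each ball. Fix \(\varepsilon > 0\). By \cref{lem:g(H)-bounded}, \(g(\mathcal{H}(x, y))\) is bounded, so its closure \(K = \overline{g(\mathcal{H}(x, y))}\) is compact in \(\mathbb{R}^2\). The open balls \(\mathcal{B}_{\varepsilon/2}(w)\) with \(w \in g(\mathcal{H}(x, y))\) cover \(g(\mathcal{H}(x, y))\). Each point of \(K\) lies within \(\varepsilon/2\) of some point of \(g(\mathcal{H}(x, y))\), so these balls also cover \(K\), possibly after enlarging the radius slightly. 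To avoid fiddling with radii, we instead take \(\mathcal{B}_{\varepsilon/4}(w)\) for \(w \in g(\mathcal{H}(x, y))\); these cover \(K\) because any point of \(K\) is within \(\varepsilon/4\) of some such \(w\). By compactness, finitely many centres \(w_1, \dots, w_m \in g(\mathcal{H}(x, y))\) suffice:
\begin{align*}
  g(\mathcal{H}(x, y)) \subset \textstyle\cup_{k=1}^m \mathcal{B}_{\varepsilon/4}(w_k) \ .
\end{align*}

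Next, \cref{lem:g-dense}, which uses that \(z\) is continuous and \(p\) is discrete, gives an index set \(\mathcal{J}\) with \(\{(x_i, y_i)\}_{i \in \mathcal{J}} \subset \mathcal{H}(x, y)\) such that \(\{g(x_i, y_i)\}_{i \in \mathcal{J}}\) is dense in \(g(\mathcal{H}(x, y))\). We should check in that proof that the subsequence is genuinely chosen inside \(\mathcal{H}(x, y)\). This holds because the samples are taken in the interior of \(\mathcal{H}(x, y)\), which is open and dense in \(\mathcal{H}(x, y)\) by semi-openness, and the full sequence is dense in \([0,1]^2\). Density in \(g(\mathcal{H}(x, y))\) then means that for each \(k\) we can pick \(i_k \in \mathcal{J}\) with \(\lVert g(x_{i_k}, y_{i_k}) - w_k \rVert < \varepsilon/4\). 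Set \(\mathcal{I} = \{i_1, \dots, i_m\}\); it is finite, and \(\{(x_i, y_i)\}_{i \in \mathcal{I}} \subset \mathcal{H}(x, y)\) by construction.

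Finally, the triangle inequality gives the cover. For any \(w \in g(\mathcal{H}(x, y))\), choose \(k\) with \(\lVert w - w_k \rVert < \varepsilon/4\). Then
\begin{align*}
  \lVert w - g(x_{i_k}, y_{i_k}) \rVert < \varepsilon/2 < \varepsilon \ ,
\end{align*}
so \(w \in \mathcal{B}_\varepsilon(g(x_{i_k}, y_{i_k}))\), which is the claimed inclusion. The main point needing care is the second step: making sure the dense subsequence lies inside \(\mathcal{H}(x, y)\) and not merely in its closure. This relies on semi-openness guaranteeing that the interior of \(\mathcal{H}(x, y)\) is nonempty (when \(\mathcal{H}(x, y)\) is nonempty, which it is since it contains \((x, y)\)) and dense in \(\mathcal{H}(x, y)\). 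The other steps are routine.
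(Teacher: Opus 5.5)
Your proof is correct and follows essentially the same route as the paper. Boundedness from \cref{lem:g(H)-bounded} gives a finite cover of \(g(\mathcal{H}(x, y))\) by small balls, \cref{lem:g-dense} places a sample \(g(x_i, y_i)\) with \((x_i, y_i) \in \mathcal{H}(x, y)\) near each centre, and the triangle inequality upgrades this to the \(\varepsilon\)-cover. Centring your balls at points \(w_k \in g(\mathcal{H}(x, y))\) with radius \(\varepsilon/4\), rather than the paper's \(\varepsilon/2\), is a small tidy-up: it guarantees that each small ball actually meets \(g(\mathcal{H}(x, y))\), which the paper uses implicitly when it asserts that every \(\varepsilon/2\)-ball contains an element of the dense subsequence.
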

\begin{proof}
  Since \(z\) is continuous, then \(g(\mathcal{H}(x, y))\) is bounded due to \cref{lem:g(H)-bounded}.
  Since \(g(\mathcal{H}(x, y)) \subset \mathbb{R}^2\), then it is totally bounded
  and can thus be covered by finitely many open balls of radius \(\varepsilon/2\).
  Since \(p\) is discrete and \(z\) is continuous,
  then \(\{g(x_i,y_i)\}_{i \in \mathcal{J}}\) is dense in \(g(\mathcal{H}(x, y))\)
  for some \(\mathcal{J} \subset \mathbb{N}\) due to \cref{lem:g-dense}.
  Consequently, each of the aforementioned \(\varepsilon/2\)-balls
  contains an element of \(\{g(x_i,y_i)\}_{i \in \mathcal{J}}\),
  and the same also holds for some \(\{g(x_i,y_i)\}_{i \in \mathcal{I}}\),
  where \(\mathcal{I} \subset \mathcal{J}\) is finite.
  Taking an open ball of radius \(\varepsilon\)
  around each element of \(\{g(x_i,y_i)\}_{i \in \mathcal{I}}\),
  we see that these \(\varepsilon\)-balls,
  in their union,
  cover all of the previously mentioned \(\varepsilon/2\)-balls.
  Consequently, these \(\varepsilon\)-balls cover \(g(\mathcal{H}(x, y))\) as well.
\end{proof}
\begin{lemma}
  \label{lem:intersection}
  Suppose \(z\) is continuous and \(p\) is connected.
  Let \(\varepsilon > 0\), and suppose some \(\mathcal{I} \subset \mathbb{N}\) satisfies
  \begin{align*}
    \{(x_i,y_i)\}_{i \in \mathcal{I}}
    &\subset \mathcal{H}(x, y) \ , \\  
    g(\mathcal{H}(x, y))
    &\subset \cup_{i \in \mathcal{I}} \mathcal{B}_\varepsilon(g(x_i, y_i))
    \ .
  \end{align*}
  Then any non-empty proper subset \(\varnothing \ne \mathcal{J} \subsetneq \mathcal{I}\) satisfies
  \begin{align*}
    \cup_{i \in \mathcal{J}} \mathcal{B}_\varepsilon(g(x_i, y_i))
    \ \cap \
    \cup_{i \in \mathcal{I} \setminus \mathcal{J}} \mathcal{B}_\varepsilon(g(x_i, y_i))
    \ \ne \ \varnothing
    \ .
  \end{align*}
\end{lemma}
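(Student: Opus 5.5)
We argue by contradiction, using the connectedness of \(g(\mathcal{H}(x, y))\) from \cref{lem:g(H)-connected}. Suppose some non-empty proper subset \(\varnothing \ne \mathcal{J} \subsetneq \mathcal{I}\) makes the two unions disjoint. Write
\begin{align*}
  U = \cup_{i \in \mathcal{J}} \mathcal{B}_\varepsilon(g(x_i, y_i)) \ , \quad
  V = \cup_{i \in \mathcal{I} \setminus \mathcal{J}} \mathcal{B}_\varepsilon(g(x_i, y_i)) \ .
\end{align*}
Both are open subsets of \(\mathbb{R}^2\), being unions of open balls, and by assumption \(U \cap V = \varnothing\). By hypothesis \(g(\mathcal{H}(x, y)) \subset U \cup V\).

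The plan is to show that \(U\) and \(V\) induce a separation of \(g(\mathcal{H}(x, y))\). Set \(A = g(\mathcal{H}(x, y)) \cap U\) and \(C = g(\mathcal{H}(x, y)) \cap V\). These are open in the subspace topology and disjoint, and their union is all of \(g(\mathcal{H}(x, y))\). Both are non-empty. Since \(\mathcal{J} \ne \varnothing\), pick \(i \in \mathcal{J}\). Because \((x_i, y_i) \in \mathcal{H}(x, y)\), the point \(g(x_i, y_i)\) lies in \(g(\mathcal{H}(x, y))\), and it is the center of a ball in \(U\), so it lies in \(A\). Symmetrically, \(\mathcal{I} \setminus \mathcal{J} \ne \varnothing\) gives a point of \(C\). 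Hence \(g(\mathcal{H}(x, y))\) is a union of two disjoint non-empty relatively open sets, contradicting \cref{lem:g(H)-connected}. The hypotheses needed there, namely \(z\) continuous and \(p\) connected, are exactly those of the present lemma.

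The only subtle point is remembering that the ball centers themselves belong to \(g(\mathcal{H}(x, y))\). This is what the first hypothesis \(\{(x_i,y_i)\}_{i \in \mathcal{I}} \subset \mathcal{H}(x, y)\) guarantees, and it is what makes both pieces non-empty. Beyond that, the argument is just the definition of connectedness, so I expect no real obstacle. Finiteness of \(\mathcal{I}\) is not even needed, since arbitrary unions of open balls are open.
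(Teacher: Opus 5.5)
Your proof is correct and is essentially the paper's argument. The paper forms the same two relatively open pieces \(g_1, g_2\) of \(g(\mathcal{H}(x, y))\), which are non-empty because the ball centers lie in \(g(\mathcal{H}(x, y))\), and invokes \cref{lem:g(H)-connected} to get \(g_1 \cap g_2 \ne \varnothing\). It then concludes because \(g_1 \cap g_2\) lies in the intersection of the two unions, so the only difference is that you argue by contradiction rather than directly.
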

\begin{proof}
  Since \(z\) is continuous and \(p\) is connected,
  then \(g(\mathcal{H}(x, y))\) is also connected due to \cref{lem:g(H)-connected}.
  Since \(\{(x_i,y_i)\}_{i \in \mathcal{I}} \subset \mathcal{H}(x, y)\),
  then
  \begin{align*}
    g_1 &\doteq g(\mathcal{H}(x, y)) \ \cap \ \cup_{i \in \mathcal{J}} \mathcal{B}_\varepsilon(g(x_i, y_i)) \ \ne \ \varnothing \ , \\
    g_2 &\doteq g(\mathcal{H}(x, y)) \ \cap \ \cup_{i \in \mathcal{I} \setminus \mathcal{J}} \mathcal{B}_\varepsilon(g(x_i, y_i)) \ \ne \ \varnothing \ .
  \end{align*}
  Since \(g_1\) and \(g_2\) are open in \(g(\mathcal{H}(x, y))\)
  with respect to the subspace topology induced by the usual topology on \(\mathbb{R}^2\), and
  \begin{align*}
    g_1 \cup g_2
    &= g(\mathcal{H}(x, y)) \ \cap \ \cup_{i \in \mathcal{I}} \mathcal{B}_\varepsilon(g(x_i, y_i))
    \ = \ g(\mathcal{H}(x, y))
    \ , \\
    g_1 \cap g_2
    &\subset \cup_{i \in \mathcal{J}} \mathcal{B}_\varepsilon(g(x_i, y_i)) \ \cap \ \cup_{i \in \mathcal{I} \setminus \mathcal{J}} \mathcal{B}_\varepsilon(g(x_i, y_i)) \ ,
  \end{align*}
  then \(g_1 \cap g_2 \ne \varnothing\) due to \(g(\mathcal{H}(x, y))\) being connected.
  But then
  \begin{align*}
    \cup_{i \in \mathcal{J}} \mathcal{B}_\varepsilon(g(x_i, y_i)) \ \cap \ \cup_{i \in \mathcal{I} \setminus \mathcal{J}} \mathcal{B}_\varepsilon(g(x_i, y_i)) \ \ne \ \varnothing \ .
  \end{align*}
\end{proof}
We are now ready to combine
\cref{lem:cover} and \cref{lem:intersection}
to show that inherent radiance segmentation
is possible to obtain algorithmically in idealized conditions:
\begin{theorem}
  \label{thm:main2}
  For any \((x, y) \in [0, 1]^2\), let
  \begin{align*}
    p: (x, y) \mapsto \textstyle\int_\Lambda e^{qz(x, y)}Q[x, y]\mathrm{d}\lambda \ ,
  \end{align*}
  where \(q, Q[x, y] : \Lambda \to \mathbb{R}\) and \(z: [0, 1]^2 \to [0, \infty)\).
  Suppose \(z\) is continuous,
  while \(p\) is discrete, connected and \(\varepsilon\)-separable.
  Assume \(\{(x_i, y_i)\}_{i=1}^\infty\) is a dense sequence in \([0, 1]^2\).
  Then there exists an algorithm which inputs
  \(z(x_i, y_i)\) and \(p(x_i, y_i)\)
  for \(i = 1,\dots,n\),
  and correctly outputs,
  for \(k = 1,\dots, n\),
  the sets
  \begin{align*}
    \mathcal{I}_n(k) = \{ 1 \le i \le n \mid Q[x_i, y_i] = Q[x_k, y_k] \}
    \ ,
  \end{align*}
  provided \(n\) is large enough and \(\varepsilon\) is known (or lower-bounded).
\end{theorem}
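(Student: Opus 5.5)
The algorithm is single-linkage clustering at threshold \(\varepsilon\) in the plane of pairs \(g(x_i, y_i)\). Given \(z(x_i, y_i)\) and \(p(x_i, y_i)\) for \(i = 1,\dots,n\), we form the graph \(G_n\) on vertices \(\{1,\dots,n\}\). An edge joins \(i\) and \(k\) whenever \(\lVert g(x_i, y_i) - g(x_k, y_k) \rVert < 2\varepsilon'\), where \(\varepsilon' = \varepsilon/2\). The output for \(k\) is the connected component of \(G_n\) containing \(k\). If only a lower bound on \(\varepsilon\) is known, we use that bound in place of \(\varepsilon\); \(\varepsilon\)-separability holds a fortiori for smaller \(\varepsilon\). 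The proof shows that, for \(n\) large, components of \(G_n\) coincide exactly with the sets \(\mathcal{I}_n(k)\).

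\emph{No false merges.} If \(Q[x_i, y_i] \ne Q[x_k, y_k]\), then \(\varepsilon\)-separability gives \(\lVert g(x_i, y_i) - g(x_k, y_k)\rVert \ge \varepsilon = 2\varepsilon'\). So every edge of \(G_n\) joins two pixels with the same \(Q\). Each component is therefore contained in some \(\mathcal{I}_n(k)\), and this holds for every \(n\) with no largeness assumption.

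\emph{No false splits.} This step needs \(n\) large, and it is where \cref{lem:cover} and \cref{lem:intersection} enter. A subtlety is that we cannot take \(n\) large for all infinitely many regions at once. However, only the finitely many distinct regions \(\mathcal{H}(x_k, y_k)\), \(k \le n\), matter, and they change as \(n\) grows. To handle this, I would note that discreteness together with \([0,1]^2\) having finite area forces only countably many regions with nonempty interior. Being semi-open, each region has nonempty interior (assuming it is nonempty). Even so, "\(n\) large enough" must be interpreted region by region. I would phrase the claim as follows: for each fixed region \(\mathcal{H}\), there is an \(N(\mathcal{H})\) such that for \(n \ge N(\mathcal{H})\) all sampled points of \(\mathcal{H}\) among the first \(n\) lie in one component. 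If necessary, "large enough" is read relative to the region of a fixed \(k\); this is the main obstacle, and I would state this interpretation explicitly.

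For fixed \(\mathcal{H} = \mathcal{H}(x, y)\), apply \cref{lem:cover} with radius \(\varepsilon'\). This gives a finite index set \(\mathcal{I}\) with samples in \(\mathcal{H}\) whose \(\varepsilon'\)-balls cover \(g(\mathcal{H})\). Let \(N(\mathcal{H}) \ge \max \mathcal{I}\). By \cref{lem:intersection}, for every nonempty proper \(\mathcal{J} \subsetneq \mathcal{I}\), some \(\varepsilon'\)-ball of \(\mathcal{J}\) meets some \(\varepsilon'\)-ball of \(\mathcal{I}\setminus\mathcal{J}\). Their centres are then at distance \(< 2\varepsilon'\), so \(G_n\) has an edge across the cut. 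A graph with an edge across every cut is connected, so \(\mathcal{I}\) lies in one component.

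Now take any other sample \(i \le n\) in \(\mathcal{H}\). Then \(g(x_i, y_i) \in g(\mathcal{H})\) lies in some ball \(\mathcal{B}_{\varepsilon'}(g(x_m, y_m))\) with \(m \in \mathcal{I}\), so \(i\) is adjacent to \(m\). Hence all samples of \(\mathcal{H}\) among the first \(n\) lie in a single component.

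Combining the two inclusions, the component of \(k\) equals \(\mathcal{I}_n(k)\) once \(n \ge N(\mathcal{H}(x_k, y_k))\) for every relevant \(k\). I expect the only delicate points to be this uniformity issue over regions, and the requirement that \(\mathcal{H}\) have nonempty interior, which is what makes the cover in \cref{lem:cover} attainable by actual samples.
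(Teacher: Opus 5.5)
Your proposal is correct and follows essentially the same route as the paper: a cover of $g(\mathcal{H}(x_k,y_k))$ by $\varepsilon/2$-balls from \cref{lem:cover}, the cut argument of \cref{lem:intersection} forcing a link of length $<\varepsilon$ across every split, and $\varepsilon$-separability excluding links that leave the region. Your threshold-graph components are exactly what the paper's greedy nearest-neighbour growth from $\{k\}$ computes, and the paper applying \cref{lem:intersection} directly to $\mathcal{I}_n(k)$ rather than to the cover is an immaterial difference; your per-region reading of ``$n$ large enough'' is also what the paper's argument actually proves, since it fixes $k$ before choosing $n$.
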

\begin{proof}  
  Provided \(n\) is large enough,
  since \(z\) is continuous and \(p\) is discrete,
  then \cref{lem:cover} implies that there is some
  \(\mathcal{I} \subseteq \{1, \dots, n\}\) satisfying
  \begin{align*}
    \{(x_i,y_i)\}_{i \in \mathcal{I}}
    &\subset \mathcal{H}(x_k, y_k) \ , \\
    g(\mathcal{H}(x_k, y_k))
    &\subset \cup_{i \in \mathcal{I}} \mathcal{B}_{\varepsilon/2}(g(x_i, y_i))
    \ .
  \end{align*}
  But since
  \begin{align*}
    \mathcal{I}_n(k) = \{1 \le i \le n \mid (x_i, y_i) \in \mathcal{H}(x_k, y_k)\} \ ,
  \end{align*}
  then \(\mathcal{I} \subseteq \mathcal{I}_n(k)\) and
  \begin{align*}
    g(\mathcal{H}(x_k, y_k)) \subset \cup_{i \in \mathcal{I}_n(k)} \mathcal{B}_{\varepsilon/2}(g(x_i, y_i))
    \ .
  \end{align*}
  Since \(z\) is continuous and \(p\) is connected,
  then \cref{lem:intersection} implies that any \(\mathcal{J}\) with
  \(\varnothing \ne \mathcal{J} \subsetneq \mathcal{I}_n(k)\)
  satisfies
  \begin{align*}
    \cup_{i \in \mathcal{J}} \mathcal{B}_{\varepsilon/2}(g(x_i, y_i))
    \ \cap \
    \cup_{i \in \mathcal{I}_n(k) \setminus \mathcal{J}} \mathcal{B}_{\varepsilon/2}(g(x_i, y_i))
    \ \ne \ \varnothing
    \ .
  \end{align*}  
  But this can happen only if
  \begin{align*}
    \min_{a \in \mathcal{J}, \, b \in \mathcal{I}_n(k) \setminus \mathcal{J}} \lVert g(x_a, y_a) - g(x_b, y_b) \rVert < \varepsilon
    \ .
  \end{align*}
  On the other hand, since \(p\) is \(\varepsilon\)-separable, then
  \begin{align*}
    \min_{a \in \mathcal{J}, \, b \not \in \mathcal{I}_n(k)} \lVert g(x_a, y_a) - g(x_b, y_b) \rVert \ge \varepsilon
    \ ,
  \end{align*}
  where \(b \not \in \mathcal{I}_n(k)\) is shorthand for \(b \in \{1, \dots, n\} \setminus \mathcal{I}_n(k)\).
  Now, let \(\mathcal{J}_1 = \{k\} \subseteq \mathcal{I}_n(k)\), and for any \(1 \le t < n\) let
  \begin{align*}
    \mathcal{J}_{t+1} = \mathcal{J}_t \cup \{\arg \min_{b \not \in \mathcal{J}_t} \min_{a \in \mathcal{J}_t}\lVert g(x_a, y_a) - g(x_b, y_b) \rVert\}
    \ .
  \end{align*}
  Then, by induction, we have
  \begin{align*}
    \mathcal{J}_1 \subset \cdots \subset \mathcal{J}_T = \mathcal{I}_n(k)
    \ ,
  \end{align*}
  where \(T = |\mathcal{I}_n(k)|\) is the smallest integer satisfying
  \begin{align*}
    \min_{b \not \in \mathcal{J}_T}\min_{a \in \mathcal{J}_T} \lVert g(x_a, y_a) - g(x_b, y_b) \rVert \ge \varepsilon
    \ .
  \end{align*}
  This fully defines an algorithm for obtaining \(\mathcal{I}_n(k)\),
  provided \(\varepsilon\) is known,
  though any positive lower bound works as well.
\end{proof}

A visualization of \cref{thm:main2} on a simple example
is given in \cref{fig:segmentation}.
The associated plot for \(g(x, y)\)
highlights what makes inherent radiance segmentation tractable in this setting:
the fact that \(g(\mathcal{H}(x, y))\) is a continuous curve for any point \((x, y)\).
Perhaps the greatest deviation from a real setting here is the disallowance of shadow gradients
as a consequence of focusing on the rather strict notion of ``regions'' given by
\begin{align*}
  \mathcal{H}(x, y) = \{(x', y') \mid Q[x', y'] = Q[x, y]\} \ .
\end{align*}
This should not be too hard to generalize, however,
for example by considering the more ``relaxed'' version:
\begin{align*}
  \mathcal{H}(x, y) = \{(x', y') \mid \exists \gamma \in \mathbb{R} \setminus \{0\} \text{ s.t. } \gamma Q[x', y'] = Q[x, y]\} \ .
\end{align*}
This would result in the sets \(g(\mathcal{H}(x, y))\)
being two-dimensional regions rather than one-dimensional curves,
which should not cause too much trouble in identifying them
provided the spatial resolution of the camera is sufficiently high.
It would, however, introduce new challenges in adapting
the results from \cref{sec:att-funcs}
to this more relaxed setting,
which we leave for future research.

\begin{figure}[htb]
  \centering
  \includegraphics[width=1.0\linewidth]{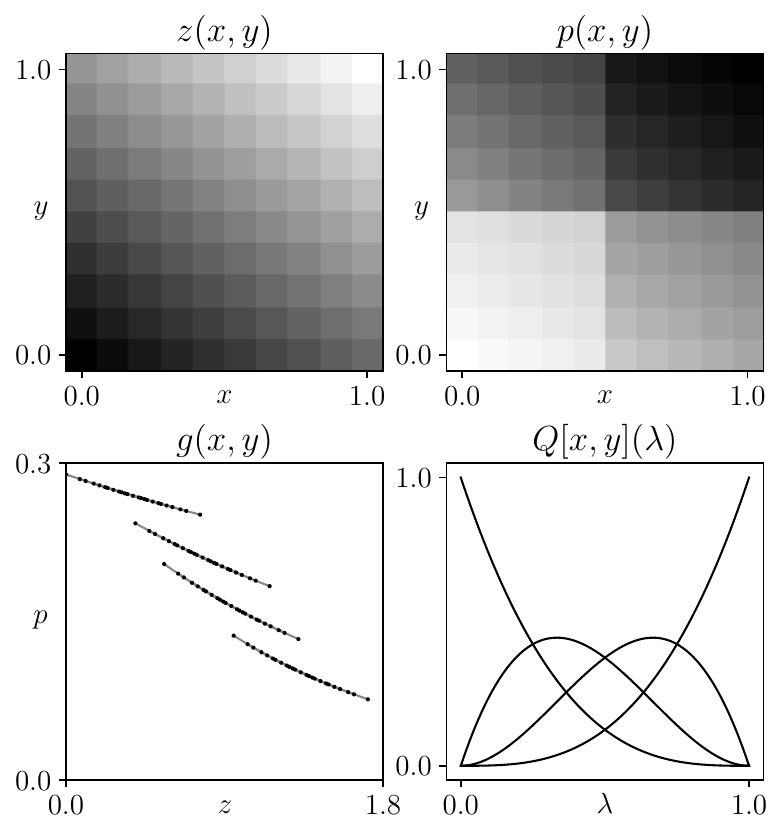}
  \caption{
    Inherent radiance segmentation in an idealized setting.
    Top left: depth map \(z(x, y) = y + x/\sqrt{2}\) visualized as a heat map.
    Top right: pixel intensities \(p(x, y) = \int_\Lambda e^{qz(x,y)}Q[x, y]\mathrm{d}\lambda\),
    where \(\Lambda = [0, 1], q(\lambda) = -\lambda\), and \(Q[x, y](\lambda) = \binom{3}{k}\lambda^k(1-\lambda)^{3-k}\) for \(0 \le k \le 3\)
    are the 4 Bernstein polynomials of degree 3.
    Bottom left: pairs \(g(x, y) = (z(x, y), p(x, y))\),
    with the black points coming from the pixels
    and the gray curves coming from the regions \(\mathcal{H}(x, y)\).
    Bottom right: The Bernstein polynomials used for the functions \(Q[x, y] = (L[x, y] - B)S_j\).
    As the image resolution increases,
    the black points in the bottom left subfigure become
    increasingly concentrated within their associated gray curves,
    making segmentation easier.
  }
  \label{fig:segmentation}
\end{figure}

\section{Discussion and conclusion}
\label{sec:conclusion}

The difficulty of validating underwater color restoration methods
is the greatest obstacle to using color as a reliable signal in aquatic sciences.
The vast diversity of possible visibility conditions
poses practical challenges for empirical validation,
while theoretical validation lacks mathematical tools
for understanding when the underlying problem is well-posed.
As an early contribution to the latter,
we investigated the utility of inherent radiance segmentation,
which lends itself well to mathematical analysis due to being simple to state.
We found that having oracle access to such segmentation
is often sufficient to narrow the correct solution to a finite uncertainty interval
whose length converges to zero as the spatial resolution of the camera increases.
Furthermore, we identified the first conditions under which
such segmentation can be determined algorithmically,
though these are too idealized to be immediately applicable to real-world data,
requiring inherent radiance to be piecewise constant in image coordinates.
Still, understanding this simplified setting has illuminated
possible paths forward for future generalizations,
including that of allowing continuously varying shadows.

In conclusion,
the gap in underwater color restoration
between what we know about the constraints of real-world data
and the constraints that guarantee solvability remains significant,
though the latter has now been demonstrated to be susceptible to
mathematical investigation.


%

\appendices


\section*{Acknowledgments}
We thank Dr. Herdís Steinsdóttir for assistance with illustrations, and all COLOR Lab members for support.
\paragraph*{Funding}
This work was supported by grants to D.A. from the Schmidt Marine Technology Partners (G-22-63208, G-24-66610), Israel Science Foundation (1055/22, 2788/22), Office of Naval Research Global (N629092312104), and European Union’s Horizon 2020 research and innovation program GA (101094924, ANERIS). G.S. was supported by a postdoctoral scholarship from the Maurice Hatter Foundation. 
\paragraph*{Author contributions}
Authors contributed equally to this work.
\paragraph*{Competing interests}
There are no competing interests to declare.

\ifCLASSOPTIONcaptionsoff
  \newpage
\fi



%
%
%

\bibliographystyle{IEEEtran}
\bibliography{biblio}

%








\end{document}